\documentclass[aps,twocolumn,pra,superscriptaddress,nofootinbib]{revtex4-2}

\usepackage{amsmath}
\usepackage{amssymb,nicefrac}
\usepackage{mathrsfs}
\usepackage{bm,dsfont}
\usepackage[table,dvipsnames]{xcolor}
\usepackage{enumitem}
\usepackage{amsthm}
\usepackage{graphicx}
\usepackage{verbatim}
\usepackage{physics}
\usepackage[normalem]{ulem}
\usepackage[colorlinks=true,linkcolor=blue,citecolor=blue,urlcolor=blue]{hyperref}
\usepackage{hypcap}
\usepackage{cleveref}

\newcommand{\id}{\mathds{1}}
\newcommand{\ii}{\mathrm{i}}

\newtheorem{proposition}{Proposition}

\begin{document}

\title{Quantum models of interaction Hamiltonian and their paradoxes}

\author{Nicolas Gisin}
\affiliation{Department of Applied Physics, University of Geneva, Switzerland}
\affiliation{Constructor University, 28759 Bremen, Germany}

\author{Serge Massar}
\affiliation{Laboratoire d'Information Quantique, Universit\'e libre de Bruxelles (ULB), Belgium}

\author{Jef Pauwels}
\affiliation{Department of Applied Physics, University of Geneva, Switzerland}
\affiliation{Constructor University, 28759 Bremen, Germany}

\author{Pavel Sekatski}
\affiliation{Department of Applied Physics, University of Geneva, Switzerland}

\begin{abstract}
In quantum physics it is commonplace to model the interaction of remote systems with a many-body Hamiltonian. Taking such an action-at-a-distance description {\it \`a la lettre} leads to various paradoxes related to faster-than-light communication and apparent inconsistencies in local energy accounting. It also neglects residual effects, such as entanglement between the remote systems and the mediator that implements the interaction, or the decoherence that arises when the remote systems undergo local evolution. We study simple microscopic quantum models that respect the light cone by design and reproduce two-body Hamiltonians. For these models we quantitatively analyze the residual effects of the microscopic mediator on the remote systems, including dressing of stationary states, and decoherence in the presence of fast local control. We show how the models resolve the paradoxes.
\end{abstract}

\maketitle

\section{Introductory paradoxes}

Ever since Newton, it has often been useful to describe the interaction of distant bodies through an instantaneous potential: a single function (the Hamiltonian) of the remote degrees of freedom that directly generates the forces. In modern physics this viewpoint is understood as an effective description. Fundamentally, interactions are mediated by additional degrees of freedom (fields, modes) that respect the principle of local action and propagate disturbances at finite speed. 

Effective nonlocal Hamiltonians are operationally very powerful. They describe a vast range of phenomena, from planetary orbits to atomic spectra. In quantum science, they have been used to quantify how interactions generate entanglement and classical communication~\cite{DurVidalCiracEtAl2001,BennettEtAl2002Sim,BennettHarrowLeungSmolin2003Capacities,GisinZambrini2018Clocks,NielsenEtAl2003Resource}, and they give rise to protocols in which nonlocal observables become accessible given additional resources or classical communication~\cite{Teleportation1993,GroismanVaidman2001,GroismanReznikVaidman2003,Vaidman2003Instantaneous,ClarkEtAl2010,IshizakaHiroshima2008,PauwelsEtAl2025PRX}.

Taking an effective interaction Hamiltonian too literally has two negative consequences. First, it obscures the local propagation mechanism and can suggest instantaneous influence. Second, it obscures where interaction energy resides on short time-scales, so that rapid actions invite conflicting accounts of local energy flow. Both aspects connect to classic debates on time--energy tradeoffs and the operational meaning of rapid energy readout~\cite{LandauPeierls1931,AharonovBohm1961,AharonovReznik2000,AharonovMassarPopescu2002,MassarPopescu2005,AtiaAharonov2017,PaivaLoboCohen2022}. 

These difficulties arise because the effective interaction Hamiltonians are never truly fundamental: they emerge by eliminating local mediators.
 This is seen explicitly in many familiar systems, from spin--spin couplings in NMR~\cite{ErnstBook1987} and Born--Oppenheimer molecular forces~\cite{AtkinsFriedman2010} to circuit QED, where a near-resonant mode can be adiabatically removed~\cite{BlaisEtAl2004cQED} (see also~\cite{NielsenChuang2010}), and in light-mediated remote interactions~\cite{wang2017two,KargEtAl2019}.

If one nevertheless takes a genuinely nonlocal many-body Hamiltonian {\it \`a la lettre} as a microscopic law, one is led to sharp paradoxes. We briefly isolate these tensions below, before resolving them by embedding the interaction into an explicit microscopic mediator model, in the spirit of modern quantum information theory.

Consider two systems, $A$ and $B$, whose interaction is modeled by a nonlocal Hamiltonian $H_{AB}\neq H_A \otimes \id_B+ \id _A \otimes H_B$. This model permits faster-than-light influences. Let $\rho_{AB}$ denote the global initial state. Suppose that Alice has the possibility to apply an instantaneous unitary operation $U_A=e^{-\ii x A}$ on her system, depending on the value of a bit $x\in\{0,1\}$. After time $t$ the reduced state of Bob's system is given by
\begin{align}
    \rho_B(t|x) &= \tr_A e^{-\ii t H_{AB}} e^{-\ii x A} \rho_{AB} e^{\ii x A}  e^{\ii t H_{AB}} \\
     &= \tr_A e^{-\ii x \tilde A}  \left(e^{-\ii t H_{AB}}  \rho_{AB}   e^{\ii t H_{AB}}\right) e^{\ii x \tilde A} 
\end{align}
with $\tilde A := e^{-\ii t H_{AB}} A \, e^{\ii  t H_{AB}} = A- \ii t [H_{AB},A] + O(t ^2)$. 
Choosing the Hermitian operator $A$ such that $\tilde A$ acts nontrivially on system $B$ yields, for a suitable initial state $\rho_{AB}$ and Bob's measurement, a signaling protocol with $\rho_B(t|1)\neq\rho_B(t|0)$ for arbitrarily small positive $t$. This shows that the Hamiltonian model of the interaction is inaccurate in the short-time limit.

The same inconsistency with the principle of local action persists in lattice models, where many-body interactions are modeled by quasi-local Hamiltonians. Here, Lieb-Robinson bounds~\cite{nachtergaele2010lieb} guarantee that information \emph{effectively} propagates at a finite speed. Nevertheless, this is only true up to the presence of possible superluminal corrections. While such superluminal influences are suppressed exponentially for finite-dimensional systems with short-range interactions, this is not always the case for longer-range interactions~\cite{tran2021lieb} or lattices of continuous-variable (CV) systems~\cite{eisert2009supersonic}.

Another paradox which arises in effective models with nonlocal Hamiltonians is that they seem to assign inconsistent ``work'' to local actions \cite{GisinZambrini2018Clocks}. The point is easiest to see in the smallest nontrivial example. Consider two qubits $A$ and $B$ coupled by the (effective) interaction
\begin{equation}
H_{AB} \;=\; \frac{\mu}{2}\bigl(X_A X_B + Y_A Y_B\bigr), \qquad \mu>0,
\end{equation}
and prepared in the ground state
$\ket{\Psi^-}=\frac{1}{\sqrt2}\bigl(\ket{01}-
\ket{10}\bigr),
$
with energy $E_{\Psi^-} =\bra{\Psi^-}H_{AB}\ket{\Psi^-} = -\mu$. 
Now suppose that Alice applies a fast local $X$ gate. Since $[X_A,H_{AB}]\neq 0$, the total energy changes. In fact,
\begin{equation}
(X_A\otimes \id_B)\ket{\Psi^-}=\ket{\Phi^-} \Rightarrow
\Delta E_A := E_{\Phi^-}-E_{\Psi^-}= \mu .
\end{equation}
However, if Bob applies $X$ at the same time, then
\begin{equation}
(X_A\otimes X_B)\ket{\Psi^-}=-\ket{\Psi^-}
 \Rightarrow 
\Delta E_{AB}=0 .
\end{equation}
Thus, if one identifies the ``local energy cost'' of Alice's operation with the resulting change in the total effective energy, that cost appears to depend on whether Bob acted simultaneously. In that sense, the work associated with a local intervention is no longer locally well-defined. Taken at face value within the effective two-qubit model, this could even be interpreted as Alice's energy cost depending on Bob's distant choice, and hence as signaling. This is the tension emphasized in Ref.~\cite{GisinZambrini2018Clocks}: if a nonlocal interaction Hamiltonian is treated as fundamental, energy conservation, local interactions, and no-signaling 
cannot be reconciled.
That work explicitly raised the need for a quantum-information-inspired microscopic model of effective nonlocal Hamiltonians. 

In fact, neither paradox is specifically quantum. Newton already noticed that nonlocal gravitational forces lead to counterintuitive action at a distance at astronomical scales~\cite{mary1955}, which was only resolved by Einstein's theory of general relativity over two centuries later~\cite{einstein1916foundation}. As a more down-to-earth example, consider two point masses attached to the ends of a spring. Modeling their interaction with the  
usual elastic potential $V(x_A,x_B) = \frac{k}{2}(x_B - x_A - L)^2$ also leads to all the paradoxes discussed above, with fast local displacement of the masses playing the role of local unitary operations. The resolution is that $V$ describes only the slow, collective stretching mode. A microscopic local theory of elasticity~\cite{landau2012theory}
involves an entire tower of internal modes of the spring; a fast displacement of the boundary masses excites these modes, launching compression wavepackets that propagate at a finite speed and carry energy locally. 

Our goal here is to see how exactly the Hamiltonian interaction model breaks down in ``paradoxical'' circumstances. To do so, rather than using full-blown relativistic quantum field theory, we present and study a simple toy model of two-body interactions which exhibits the following features:
\begin{itemize}
\item[(i)] It recovers the usual interaction Hamiltonian in its appropriate regime.  
\item[(ii)] It is explicitly local and respects the light cone by design. 
\end{itemize}

The second feature contrasts with common circuit-QED effective descriptions, where $H_{AB}$ is recovered by coupling both systems to \emph{delocalized} bosonic modes and adiabatically eliminating the latter; see, e.g., Ref.~\cite{BlaisEtAl2004cQED}. Because the modes are delocalized, locality is not enforced explicitly, and this approach cannot be used to resolve the above paradoxes.
A broad class of microscopic models instead describes spatially separated systems as coupling locally to propagating quantum fields~\cite{shen2005coherent,pichler2015quantum,wang2017two,KargEtAl2019, Milburn1999,WangZanardi2002}. Such descriptions make the local system--field couplings explicit and thereby implement requirement~(ii) at the microscopic level. In many applications, however, the propagation delay is neglected through a Markov or zero-delay approximation. Other studies retain finite travel times and the resulting non-Markovian dynamics explicitly~\cite{Grimsmo2015,PichlerZoller2016,dinc2019exact}. The relation to requirement~(i) varies among these works. 
In particular, Refs.~\cite{Milburn1999,WangZanardi2002,KargEtAl2019} derive conditions under which repeated light--matter interactions generate a coherent Hamiltonian interaction between remote systems.

 Here we use a simple causal microscopic model~\cite{Milburn1999,WangZanardi2002,SpillerEtAl2006Qubus,vanLoockEtAl2008} as a starting point for a conceptual study of  microscopic models  of an effective interaction between spatially separated systems. We retain explicitly the finite mediator flight time.
In the next section we first analyse general consequences that follow from a finite mediator propagation time without committing to a microscopic model. Section~\ref{sec:single-boson} introduces the simple displacement (or qubus) model, and shows how it gives rise to an effective interaction Hamiltonian. Section~\ref{catalytic-main} looks at the catalytic mediation mechanism from a general perspective, and in particular distinguishes exact from weak mediation. 
Section \ref{sec:dressed-states} then uses these models to describe how dressed stationary states arise.
Section~\ref{sec:continuous-field} develops the continuous-field limit. Section~\ref{sec:localDecoh} studies the breakdown of the effective description under local operations. We then return to the two paradoxes in Sec.~\ref{sec:back-paradoxes} and conclude with open questions in Sec.~\ref{sec:outlook}.

\section{General consequences of finite mediator propagation time}
\label{sec:GenConsequences}

We begin by discussing some general consequences one expects to see when the Hamiltonian $H_{AB}$ is not understood as an instantaneous nonlocal action at a distance, but rather as emerging from a strictly local interaction with a mediator degree of freedom that carries the interaction between systems $A$ and $B$. For this general discussion, no particular microscopic realization of the mediator will be assumed. However, it is important that the mediator takes a finite time $T$ to travel from one system to the other. An effective interaction Hamiltonian $H_{AB}^{\rm eff}$ arises only after the mediator has been eliminated and is therefore expected to describe time scales 
long
compared with $T$. At shorter time scales, the microscopic description leaves observable traces.

\subsection{Coarse-graining over the flight time}

A simple model-independent estimate illustrates the 
separation of time scales
implicit in the effective description. If the mediator flight time prevents one from assigning the effective state to an instant more accurately than a window of width $T$, one may replace the pure state $\Psi(t)=\ketbra{\Psi(t)}_{AB}$ by its time average
\begin{align}\label{eq:time-average}
\bar \rho(t) &=\;\frac{1}{T}\int_{-T/2}^{T/2} e^{-\ii s H_{AB}}\Psi(t) e^{\ii s H_{AB}}\,ds. \\
& = \Psi(t) -\frac{T^2}{24}\,[H_{AB},[H_{AB},\Psi(t)]]\,+\,O(T^4),
\end{align} with purity
\begin{equation}
\mathrm{Tr}\big(\bar\rho(t)^2\big)
\;=\;1-\frac{T^2}{6}\,\mathrm{Var}_\Psi(H_{AB}) \,+\,O(T^4)
\label{Eq:purity}
\end{equation}
where 
\begin{equation}
\mathrm{Var}_\Psi(H_{AB}) = 
\bra{\Psi(t)} H_{AB}^2\ket{\Psi(t)} -\bra{\Psi(t)}  H_{AB}\ket{\Psi(t)} ^2 . 
\label{Eq:purity-B}
\end{equation}
Equation~(\ref{Eq:purity}) is a heuristic coarse-graining estimate, not a microscopic prediction or a universal lower bound. 

\subsection{Hybridization, dressed states, and correlations}
\label{sec:Hyb}

A related causal argument follows from the entanglement-generating ability of $H_{AB}$. Consider a situation where systems $A$ and $B$ are initially unentangled. Split the mediator into subsystems $M_{A}$ and $M_{B}$ that, during a sufficiently short time interval, can only interact with $A$ and $B$, respectively. Such evolution cannot increase entanglement across the $AM_{A}:BM_{ B}$ cut. During this short time, any entanglement generated between $A$ and $B$ 
according to the effective theory 
must therefore already be available across this microscopic cut\footnote{In general, this does not require entanglement across the $AB:M_AM_B$ cut, as mixed-state counterexamples are known~\cite{cubitt2003separable}.}.

The effective state $\ket{\Psi(t)}_{AB}$ should thus be regarded as a reduced description of an underlying joint state of the systems and mediator. Even when an effective Hamiltonian is accurate and the reduced state of $AB$ is nearly pure, the microscopic state may contain residual correlations with $M$. Stationary states of the effective theory can likewise correspond to correlated or entangled system--mediator states, commonly called hybridized or dressed states. The precise type and amount of correlation is model dependent: finite propagation time alone does not imply entanglement across the $AB:M$ bipartition.

\subsection{Decoherence under fast local evolution}
\label{subsec-Decoh}

If $A$ or $B$ is subject to fast local evolution, non-negligible on the mediator-flight timescale, the microscopic dressed state underlying the effective description $\ket{\Psi(t)}_{AB}$ can be perturbed. The outgoing mediator can then retain information about the system, producing decoherence after it is discarded. This conclusion is generic but not unavoidable: it vanishes when the local evolution preserves the relevant coupling observable.

This effect 
can already be seen in a classical setting. Consider two electric dipoles interacting through the electromagnetic (e.m.) field. The dipoles are initially static, in fixed orientations. At time $t=0$ one rotates the dipoles, the rotation taking a time $\tau$. This time-dependent charge configuration will give rise to the emission of e.m. waves. These carry information about the dipole orientations, corresponding in the quantum case to decoherence. Note that if only one of the dipoles is rotated, the radiation only carries information about that dipole. The amount of radiation emitted depends on the angle by which the dipoles are rotated, and also on whether $\tau$ is shorter or longer than the duration $T$ required for light to travel from one dipole to the other (sudden versus adiabatic evolution). 

\section{The simplest model}
\label{sec:single-boson}

\begin{figure}[h!]
    \centering
    \includegraphics[width=\columnwidth ]{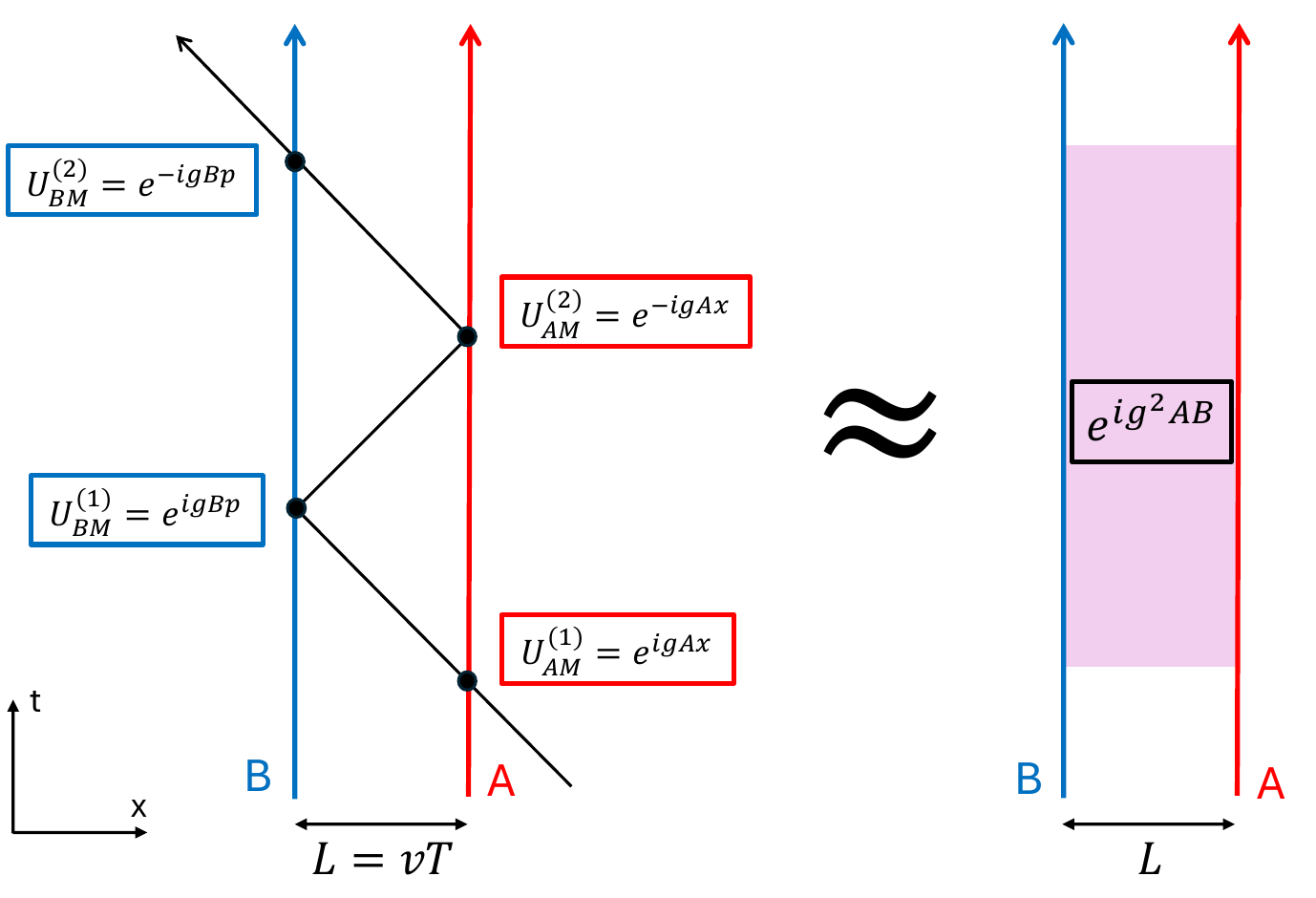}
    \caption{ Space-time diagram of a mediator interacting with systems $A$ and $B$, giving rise to an interaction $e^{\ii g^2 A B} \otimes \id_M$. Systems $A$ and $B$ are separated by distance $L$. The mediator travels at speed $v$, hence takes time $T=L/v$ to go from one system to the other. While the mediator trajectory is well defined in space-time, the effective interaction arises after the mediator has finished all its interactions and cannot be localized exactly in space-time.}
    \label{fig:loop-A}
\end{figure}

Here we adapt and generalise the  geometric-phase ``qubus'' gate  ~\cite{Milburn1999,WangZanardi2002,SpillerEtAl2006Qubus,vanLoockEtAl2008}. Our emphasis in subsequent sections will be on the finite propagation time of the propagating mediator, allowing us to study the in-flight corrections hidden by the effective two-system Hamiltonian.
This model consists of an interaction mediated by a single bosonic mode (i.e. a harmonic oscillator) with position and momentum operators obeying $[x,p]=\ii$.
The bosonic mode interacts successively with the $A$ and $B$ systems, as described in Fig.~\ref{fig:loop-A}. The successive interactions are given by the local unitary transformations
\begin{align}
    U_{ AM}^{(1)}&=e^{\ii g A x}\nonumber &\qquad
    U_{ BM}^{(1)}&=e^{\ii g  B p}\nonumber\\
    U_{ AM}^{(2)}&=e^{-\ii g  A x}  &\qquad
     U_{ BM}^{(2)}&=e^{-\ii g  B p}\,
     \label{Eq:Interactions}
\end{align}
where $A$ and $B$ are Hermitian operators and $g$ is the local coupling strength\footnote{Here and below we use the same notation $A$ and $B$ to label the systems and their observables. The meaning is nevertheless always clear from the context.}.
The overall interaction is therefore given by the unitary transformation
\begin{equation} \label{eq:twoloopprop}
    V_{ABM} = U_{BM}^{(2)} U_{AM}^{(2)} U_{BM}^{(1)}U_{AM}^{(1)} .
\end{equation}
Using the Weyl canonical commutation relation, we have
\begin{equation}
    V_{ABM} = e^{\ii g^2 A B} \otimes \id_M \, .
    \label{eq:VABM}
\end{equation}
After the four local interactions, the mediator has fulfilled its role of creating a nontrivial coupling between systems $A$ and $B$. Local causality is respected because each interaction is separated from the previous one by the time $T=L/v$ required for the mediator to travel between $A$ and $B$, where $L$ is the distance between $A$ and $B$ and $v$ is the mediator velocity.

The notion of interaction rate and effective interaction Hamiltonian arises if we suppose that new mediators arrive every $\tau$, see 
Fig.~\ref{Fig:ABM-FF} where we have chosen $\tau=4T$ in the left panel, and $\tau \ll  T$ in the right panel.
The interaction in Eq.~(\ref{eq:VABM}) then corresponds to the effective Hamiltonian
\begin{equation}
    H_{AB}^{\rm eff} =  - \frac{g^2}{\tau} A B = -\mu A B ,
\end{equation}  
where we introduced the effective coupling strength 
\begin{equation}
\mu = \frac{g^2}{\tau}  \ .
\end{equation}  

\begin{figure}
 \includegraphics[width=\columnwidth]{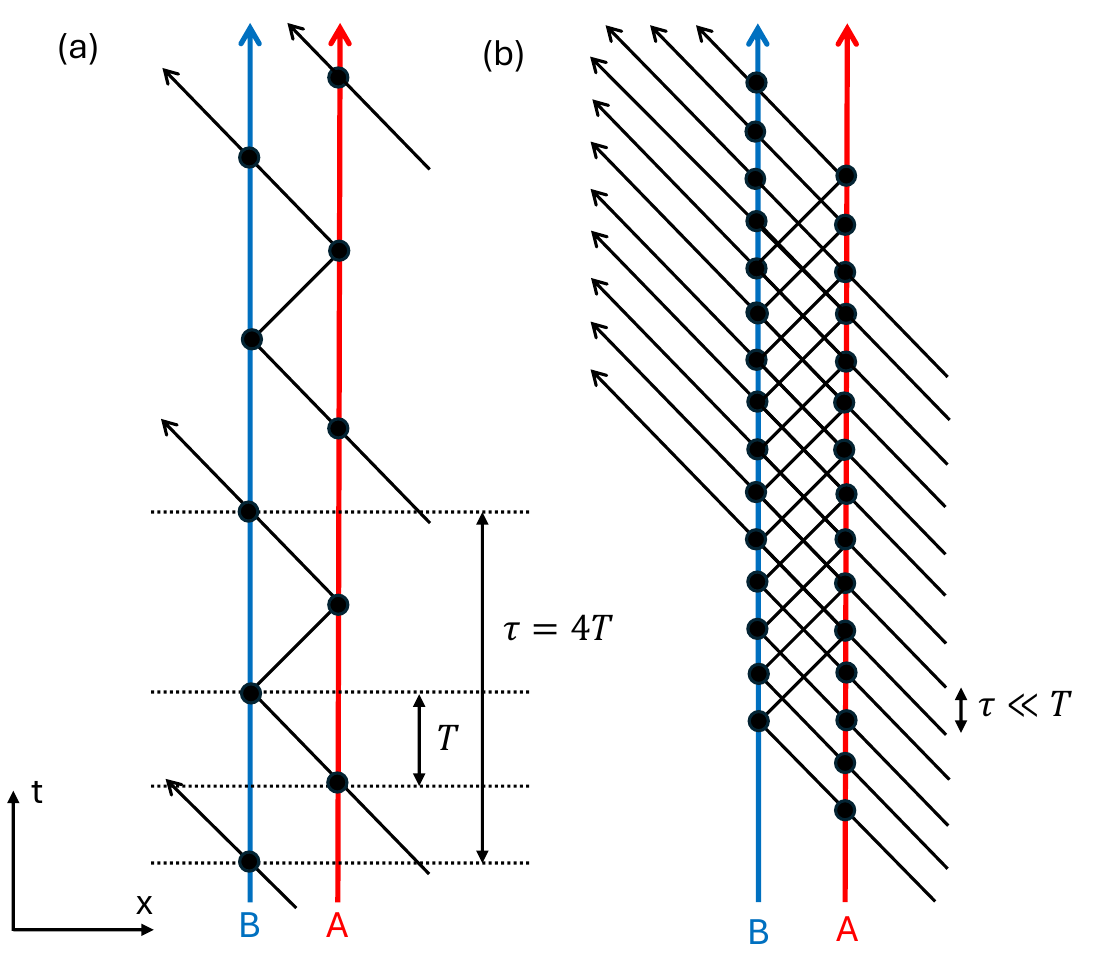}
\caption{Space-time diagrams of mediators (in black) continuously interacting with systems $A$ (red) and $B$ (blue). The interactions of the mediators with the $A$ and $B$ systems are shown by black dots. The mediators take time $T$ to travel from $A$ to $B$ (or back). A new mediator arrives every $\tau$. In panel (a) $\tau = 4T$ while in panel (b) $\tau \ll T$. In the limit $\tau \to 0$ we obtain the continuous field model.
Both models give rise to an effective interaction Hamiltonian $H^{\rm eff}_{AB}$.
}
\label{Fig:ABM-FF}
\end{figure}

\section{Catalytic mediation} \label{catalytic-main}
\label{sec:itsimei}

The simple model of Fig.~\ref{fig:loop-A} has three  crucial properties: it returns the mediator exactly to its initial state, it does so for every mediator state and for all  coupling strengths $g$, and it generates a nontrivial interaction between $A$ and $B$.  We now separate these properties and determine which features of the construction are responsible for them.  This will explain why the controlled-displacement model is exact, why the Jaynes-Cummings construction introduced below is only perturbative, and why overlapping mediator loops require additional care.

We consider two finite-dimensional systems $A$ and $B$, and a mediator $M$ which may be finite or infinite-dimensional, so that the Hilbert space is the tensor product
$\mathcal{H}_A \otimes \mathcal{H}_B \otimes \mathcal{H}_M$.
The mediator interacts successively with systems $A$, $B$, $A$ again, and then $B$ again, resulting in the  unitary transformation
\begin{align}
 V_{ABM} = U_{BM}^{(2)} U_{AM}^{(2)} U_{BM}^{(1)}U_{AM}^{(1)}.
 \label{Eq:VABM}
 \end{align}
Note that this form is general, since any local evolution of $A$, $B$, or $M$ during the four-interaction sequence can be absorbed formally into one of the transformations $U_{A,B M}^{(1,2)}$. 

 We wish for $ V_{ABM} $ to induce an interaction between systems $A$ and $B$, while leaving them uncorrelated with the mediator at the end of the interaction. This requirement can be imposed at several levels, as discussed below.

At the weakest level, mediation may be catalytic only for a prescribed
initial mediator state $\ket{\Psi_i}_M$.  This \emph{state-dependent
mediation} is standard in quantum catalysis~\cite{RevModPhys.96.025005}, but
for other initial mediator states the final $AB:M$ state will generally not
factorize.  The stronger notion of \emph{state-independent mediation}
requires factorization for every initial state of $M$.

Independently, catalyticity may hold only at one tuned value of the
coupling, or throughout an interval around $g=0$.  Three SWAP gates provide
a simple state-independent but parameter-tuned example.  Such a construction
implements one transformation $U_{AB}$ but has no weak-$g$ regime that can be
interpreted as evolution under an effective Hamiltonian.  The precise
definitions, the SWAP construction, and the equivalence of the unrestricted
factorized forms under local controls are given in
Appendix~\ref{app:catalytic}.

\subsection{Fixed-Hamiltonian exact-return loops}

An effective interaction Hamiltonian requires more than one tuned gate:
the mediator must decouple throughout an interval around $g=0$.  We focus on
the strongest of the four notions just described, namely state-independent
mediation with exact mediator return,
\begin{equation}
V_{ABM} (g)   =U_{AB} (g)\otimes  \id_M \quad \forall g \in [-\delta , + \delta] ,
\label{Eq:VABM(g)}
\end{equation}
for some $\delta>0$.  We call this \emph{state-independent mediation of
effective interactions}.

\newcommand{\itsimei}{state-independent mediation of effective interactions}

We now restrict the local transformations to one-parameter groups
generated by fixed, interaction-only Hamiltonians,
\begin{align}
 V_{ABM} (g) &= e^{\ii g H_{BM}^{(2)}} e^{\ii g H_{AM}^{(2)}} 
 e^{\ii g H_{BM}^{(1)}} e^{\ii g H_{AM}^{(1)}},
 \label{Eq:VABMIntOnly-main}
\end{align}
where ``interaction only'' means that we exclude terms acting solely on $A$, $B$, or
$M$.   We
now summarize the constraints obtained by expanding 
Eqs. (\ref{Eq:VABM(g)}) and (\ref{Eq:VABMIntOnly-main})
in powers of $g$, i.e. $V_{ABM} (g)= \id + \sum_{n=1}^\infty g^n\,  V_{ABM}^{(n)}$. Their derivation
is given in Appendix~\ref{app:catalytic}.

\subsection{Perturbative conditions}

\paragraph*{Order $n=1$} As expected, at the lowest order $g$ no induced interaction between $A$ and $B$ is found. Nevertheless, the mediation condition~(\ref{Eq:VABM(g)}) imposes the constraints
\begin{align}
    H_{AM}^{(1)}=- H_{AM}^{(2)}:=H_{AM} \\
    H_{BM}^{(1)}=- H_{BM}^{(2)} := H_{BM}.
    \label{eq:H(1)}
\end{align}
To avoid the trivial situation $V_{ABM}(g)=\id_{ABM}$,
these Hamiltonians must have a nontrivial commutator $[H_{AM},H_{BM}]\neq 0$.

\paragraph*{Order $n=2$} At order $g^2$, we find that the state independent mediation Eq.~(\ref{Eq:VABM(g)}) further imposes
\begin{equation}
\label{eq:commutator-main}
   \ii \,[H_{AM},H_{BM}] =:   h_{AB}^{(2)} \otimes \id_{M},
\end{equation}
which can be seen to require an \emph{infinite-dimensional mediator}. A canonical realization is to take a mediator composed of a finite number $N$ of bosonic modes with the associated creation and annihilation operators satisfying $[a_i,a_j^\dag]=\delta_{ij}$, and local interactions that are linear in these operators
\begin{equation}
\begin{split}
H_{AM} &=  \sum_{i=1}^N (\hat A_i \otimes a_i +h.c.)\\
 H_{BM} &= \sum_{i=1}^N (\hat B_i \otimes a_i +h.c.),
\end{split}
 \label{Eq:HAMHBM-2}
\end{equation}
where $\hat A_i,\hat B_i$ are arbitrary traceless operators acting on $A$ and $B$, respectively.
This choice leads to the following effective interaction Hamiltonian between $A$ and $B$:
\begin{align}
h_{AB}^{(2)} 
= 
\ii \sum_{i=1}^N ( \hat A_i \hat B_i^\dag - \hat A_i^\dag \hat B_i),
\label{Eq:HhAB(2)-2}
  \quad \text{with} \\
  \label{Eq:HhAB(2)-3}
  V_{ABM} (g) 
    = \id - \ii  g^2  h_{AB}^{(2)} \otimes \id_{M} +O(g^3).
\end{align}

\paragraph*{Orders $n\geq 3$} At the next order we find
\begin{equation}\label{eq:third-order}
V_{ABM}^{(3)} = -\frac{1}{2} [H_{AM}+H_{BM},h_{AB}^{(2)}].
\end{equation}
In fact, using the Baker--Campbell--Hausdorff expansion, we see that the stronger condition 
\begin{equation}
 [H_{AM},h_{AB}^{(2)}] =  [H_{BM},h_{AB}^{(2)}] = 0
 \label{Eq:CatAllOrders1}
\end{equation}
is sufficient to guarantee Eq.~(\ref{Eq:VABM(g)}) at all orders $n\geq 3$. This gives the desired state-independent mediation of effective interactions valid at all $g$
\begin{align}
    V_{ABM} (g) 
    &= \exp \left( - \ii  g^2  h_{AB}^{(2)} \right) \otimes \id_{M}.
    \label{Eq:CatAllOrders2}
\end{align}

\subsection{Exact single-mode realization}
For a single-mode  ($N=1$) bosonic mediator in Eq.~(\ref{Eq:HAMHBM-2}), the sufficient higher order condition  defined by Eq.~(\ref{Eq:CatAllOrders1}) gives
\begin{align}
H_{AM} &= \sum_i (x_i a + \bar x_i a^\dag) \ketbra{i}_A
\quad x_i \in \mathbb{C} ,
 \label{Eq:HAM-1}\\
H_{BM}&= \sum_j (y_j a + \bar  y_j a^\dag) \ketbra{j}_B 
\quad y_j \in \mathbb{C} , \label{Eq:HBM-1}
\end{align}
where $\{\ket{i}_A\}$ and $\{\ket{j}_B\}$ are the bases of the Hilbert spaces diagonalizing $A$ and $B$, and the interaction-only hypothesis is satisfied by the choice
$\sum_i x_i=\sum_j y_j=0$. The mediated interaction then reads
\begin{align}
V_{ABM}(g)
&=\left(\sum_{i,j}
e^{g^2(x_i\bar y_j-\bar x_i y_j)}
\ketbra{ij}_{AB}\right) \otimes\id_M .
\label{Eq:N=1VAB2-B}
\end{align}

The local interactions $H_{AM}$ and $H_{BM}$ defined in Eqs. (\ref{Eq:HAM-1}) and (\ref{Eq:HBM-1}) generate phase-space displacements of the mediator mode controlled by the state of the system. In particular, the simple model in Eq. (\ref{Eq:Interactions}) is obtained by choosing all $x_i$ to be real and all $y_j$ to be imaginary, so that the respective displacements are in orthogonal phase-space directions.

These \emph{controlled-displacement interactions} provide a simple system--mediator hybridization mechanism which we analyse in Sec.~\ref{sec:dressed-states}.

\subsection{Weak mediation and the JC model}
\label{subsec-WeakInt}

The condition that a single mediator is catalytic to all orders, i.e.~Eq.~(\ref{Eq:CatAllOrders2}), is highly constraining. However, when the interaction strength is small
$g^2 \to 0$, one can 
require that the mediation condition hold only to leading order in $g^2$, giving rise to the interaction
Eqs. (\ref{Eq:HhAB(2)-2}, \ref{Eq:HhAB(2)-3}). 
Upon tracing out the mediator after the interaction loop, the next-order system--mediator coupling $V_{ABM}^{(3)}$ in Eq.~(\ref{eq:third-order}) can give an $O(g^3)$ correction to the interaction Hamiltonian (a Lamb-shift-like term) as well as a dissipation channel whose strength is $O(g^6)$. 

If the mediators are supplied every $\tau$, then defining 
$\mu = g^2/\tau$, we  obtain 
an effective interaction  given by 
\begin{equation}
H_{AB}^{\rm eff} = \mu \, h_{AB}^{(2)} = \mu \, \ii \sum_{i=1}^N ( \hat A_i \hat B_i^\dag - \hat A_i^\dag \hat B_i) \ . 
\end{equation}
This is known as the weak-coupling limit and has been studied in a variety of other contexts; see, e.g.~\cite{breuer2002theory,accardi1995stochastic,Attal2007}. However, for these models, we cannot in general  take $\tau$ too small. The reason is that when $\tau < 2T$ the interactions between the systems and successive mediators do not in general commute, which introduces terms of order $g^2$  and invalidates the catalytic conditions. Hence, for a fixed flight-time $T$ and since $g^2\ll 1$, the effective interaction strength is fundamentally limited to $\mu \ll 1/(2T)$ in the weak-coupling limit. 

Nevertheless, the weak-coupling case is interesting because there is  more freedom in designing the interactions, since
we do not require that $V_{ABM}^{(3)} $ and higher order terms vanish. 
As an illustration, we consider two-level atoms $A$ and $B$ coupled to a single-mode mediator through a Jaynes--Cummings (JC) interaction that exchanges an atomic excitation with a mediator boson (see also Ref.~\cite{KargEtAl2019}):
\begin{equation}\label{eq:JC-model}
H_{XM} = {\ii} \left(a \, \ketbra{1}{0}_X - a^\dag \, \ketbra{0}{1}_X \right)  , \quad X=A,B .
\end{equation}
Using Eq. (\ref{Eq:HhAB(2)-2}), one finds that this gives rise to the 
 effective  nonlocal Hamiltonian 
\begin{equation}\label{eq:HJCeff-B}
   H_{AB}^{\rm eff} = \frac{\mu }{2}(Y_A\otimes X_B - X_A\otimes Y_B).
\end{equation}
Note that there are many different local interactions $H_{AM}$ and $H_{BM}$ that can give rise to the same  effective  nonlocal Hamiltonian, i.e. Eq. (\ref{eq:JC-model}) is not the only model that gives rise to Eq. (\ref{eq:HJCeff-B}).

The JC model is exceptional because, even when there are a finite number of mediators present simultaneously between the systems $A$ and $B$, i.e. $\tau < 2T$, it remains catalytic at order $g^2$, see Sec.~\ref{sec:JC-dressed}.

\section{Hybridization and dressed states}
\label{sec:dressed-states}

\begin{figure}
 \includegraphics[width= \columnwidth]{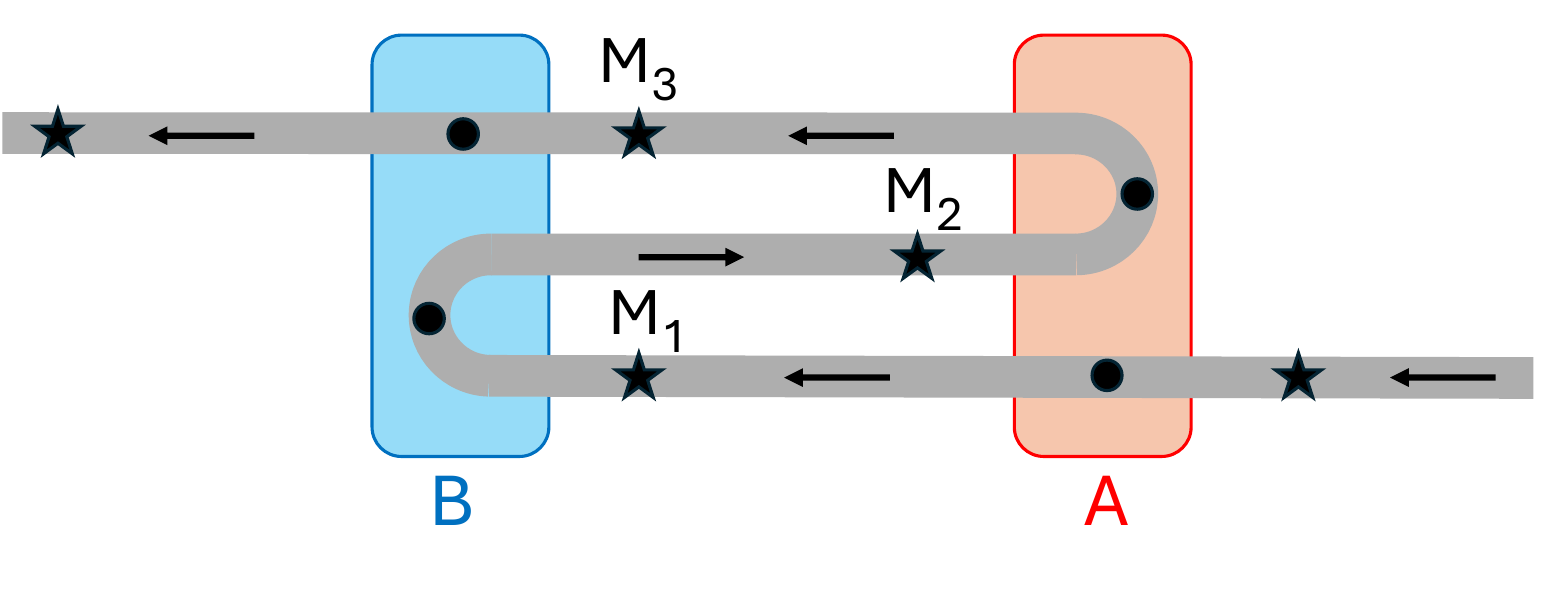}
\caption{
Spatial arrangement of systems $A$ (red) and $B$ (blue) and the mediators (stars), used to illustrate dressed states. The mediators arrive at a rate $1/\tau= 1/T$, ensuring that the mediator configuration repeats every $T$.
}
\label{Fig:strobo}
\end{figure}

We now show how the notion of a stationary dressed state, introduced in Sec.~\ref{sec:Hyb}, arises in our models. 
In order to do so, we consider a stroboscopic scenario intermediate between the two situations depicted in Fig. \ref{Fig:ABM-FF}. Namely we suppose that a new mediator is supplied every $\tau$. This ensures that we have discrete time-translation symmetry (the mediator configuration reproduces exactly every $\tau$). We chose $\tau$ such that at times $t_n:=\tau n$ there is exactly one mediator in each ``section of the waveguide'', i.e.~one mediator that has interacted once, one twice, and one three times with the systems. The waveguide picture is illustrated in Fig.~\ref{Fig:strobo} for the specific choice $\tau=T$. 

We are looking for the stationary states 
\begin{equation}
\ket{\Psi(t_n)}_{ABM}  \quad t_n = n\,\tau
\end{equation}
which are invariant, up to an $n$-dependent phase which will correspond to the interaction energy, under the action of $V_{ABM}(g)$, now understood to act on all the mediators present in the system.
In the case of perfect mediation, the mediator modes that have already completed the four interactions return to their initial state, and can be ignored. At times $t_n$ the systems are thus only correlated with the three mediator modes $M=M_1M_2M_3$ (identified in Fig.~\ref{Fig:strobo}), that have already entered, but not yet completed the interaction loop. 

We now analyse this stroboscopic scenario for the controlled-displacements model of Eq.~(\ref{Eq:Interactions}), and the  weak-coupling limit of the Jaynes-Cummings model of Eq.~(\ref{eq:JC-model}).

\subsection{Dressed states in the qubus model}

When the local interactions $U_{AM}^{(1,2)}$ and $U_{BM}^{(1,2)}$ are the controlled-displacements of Eq.~(\ref{Eq:Interactions}), the stationary states $\ket{\Psi}_{ABM}$ are easily computed. 
 Indeed, let $A=\sum_a\lambda_a\ketbra{a}_A$ and
$B=\sum_b\lambda_b'\ketbra{b}_B$ be the spectral decomposition of the coupling operators. It allows us to write 
\begin{align}
\ket{\Psi(t_n)}_{ABM} &= \sum_{a,b} c_{ab}   \ket{a,b}_{AB}  \ket{\phi_{ab}(t_n)}_{M}\ ,
\label{eq:strob3}
\end{align}
where the amplitudes $c_{a,b}$ are stationary since the interaction is controlled on the values $a$ and $b$.
Initially, all the mediators are taken to be in the same state  $\ket{\phi_0}$. Once a mediator has undergone the four interactions, it goes back to its initial state, having acquired the (geometric) phase $e^{i g^2 \lambda_a \lambda'_b}$, see Eq. (\ref{eq:VABM}).
Therefore, we have
\begin{align}
& \ket{\phi_{ab}(t_n)}_{M} = e^{i n g^2 \lambda_a \lambda'_b }   \left(  
       e^{ \ii g^2 \lambda_a \lambda_b'}\mathcal{D}\Big(-g\frac{\lambda_b'}{\sqrt{2}}\Big)\ket{\phi_0} \right)_{M_3} \nonumber\\
       &\left( 
 e^{\ii \frac{g^2}{2}\lambda_a \lambda_b'}\,
 \mathcal{D}\Big(g\frac{\ii \lambda_a-\lambda_b'}{\sqrt{2}}\Big)  \ket{\phi_0}\right)_{M_2}\!\!\left(    \mathcal{D}\Big(g\frac{\ii \lambda_a }{\sqrt{2}}\Big)\ket{\phi_0}   \right)_{M_1} 
,
 \label{eq:strob2}
\end{align}
where $\mathcal D(\alpha)=e^{\ii \sqrt 2({\rm Im}[\alpha] x -{\rm Re}[\alpha]p)}$ is the displacement operator, and
we have used the fact that the product of two displacements $\mathcal D(\beta) \mathcal D(\alpha)=e^{\ii \, \operatorname{Im}[\beta \alpha^*]}\mathcal{D} (\alpha+\beta)$ is a displacement times a phase. 
In the microscopic model the eigenstates $\ket{a,b}_{AB}$ of the effective Hamiltonian thus correspond to the stationary dressed states $\ket{a,b}_{AB}\ket{\phi_{ab}(t_n)}_{M}$ involving the three displaced mediators $M_1$, $M_{2}$, and $M_{3}$, whose displacements are controlled by the values $\lambda_a$ and $\lambda_b'$.

\subsection{Purity of the reduced system state}

Although the eigenstates $ \ket{a,b}_{AB}  \ket{\phi_{ab}(t_n)}_{M}$ 
remain factorized, their superpositions
Eq. (\ref{eq:strob3}) 
are generally entangled across the $AB:M$ cut. The amount of entanglement depends on $g$ and the initial state $\ket{\phi_0}_M$ of the mediator.  The presence of this entanglement  stems from the same noncommutativity of the controlled displacements that gives rise to the interaction energy $g^2 \lambda_a \lambda'_b$ in Eq. (\ref{eq:strob2}).

Let us assume that the  mediators are supplied in the vacuum state 
\begin{equation}
\ket{\phi_0} = \ket{0}.
\label{Eq:vac}
\end{equation}
In this case we can compute the overlaps
\begin{equation}\label{eq:branch-overlap}
\left|\braket{\phi_{ab}(t_n)}{\phi_{a'b'}(t_n)}_M\right|
=
\exp\!\left[
-\frac{g^2}{2}\,\Delta_{ab,a'b'}
\right]\,
\end{equation}
with $\Delta_{ab,a'b'}
:=
(\lambda_a-\lambda_{a'})^2+(\lambda_b'-\lambda'_{b'})^2$.
Tracing out the mediator modes   gives
\begin{align}\label{eq:purity-exact}
\Tr(\rho_{AB}^2) &=
\sum_{a,b,a',b'} |c_{ab}\, c_{a'b'}|^2\,e^{- g^2\Delta_{ab,a'b'}}
\\
\label{eq:purity-weak}
&\approx
1-2 g^2\left(\mathrm{Var}_\Psi(A)+\mathrm{Var}_\Psi(B)\right) ,
\end{align}
where in the second line we have taken the limit $g^2 \ll 1$.
Thus the reduced state becomes mixed whenever its superposition contains branches with distinguishable $A$ or $B$ eigenvalues, equivalently whenever $\mathrm{Var}_\Psi(A)+\mathrm{Var}_\Psi(B)>0$.
The heuristic argument leading to Eq.~(\ref{Eq:purity}) anticipated an impurity, but the microscopic model gives a larger effect: it scales as $O(g^2)=O(\mu T)$ rather than $O(\mu^2 T^2)$.

\subsection{Dressed state in the JC model}

\label{sec:JC-dressed}

As mentioned above, second-order catalyticity for one isolated mediator is not, by itself, sufficient to ensure catalyticity when several mediators are simultaneously present in the loop. 

However, the local Jaynes--Cummings model of Eq.~(\ref{eq:JC-model}) is exceptional for the particular stroboscopic implementation considered here.  When vacuum mediators arrive every $T$, as in Fig.~\ref{Fig:strobo}, the  JC model admits four states that are stationary up to the order $g^2$. The full derivation is given in Appendix~\ref{app:JC-dressed}, here we give the general picture.

Since the JC model preserves the number of excitations, the state with no excitations
\begin{equation}
\ket{\Psi_{00}}_{ABM}=\ket{0,0}_{AB} \prod_i \ket{0}_{M_i} .
\end{equation}
is  exactly stationary. The  three additional \emph{quasi-stationary} dressed states are
\begin{align}
     \ket{\Psi_{\pm}(t_n)}_{AB M}= &
    e^{\pm \ii n g^2} \Bigl(
\ket{\psi_\pm}_{AB}\ket{0}_{ M}
\nonumber\\
&
+g \ket{0,0}_{AB}\ket{\phi_\pm (n)}_{ M}+ O(g^2)
\Bigr)     
\nonumber
     \\
     \ket{\Psi_{11}(t_n)}_{AB M} =& \ket{1,1}_{AB}\ket{0}_{ M}
     +g \Bigl( \ket{1,0}_{AB}\ket{\phi_{10}(n)}_{ M} 
\nonumber\\     
     &+ \ket{0,1}_{AB}\ket{\phi_{01}(n)}_{ M} \Bigr)   + O(g^2), \nonumber
     \label{eq:Psi11}
\end{align}
where $\ket{\psi_\pm} = \frac{1}{\sqrt 2} (\ket{10}\pm \ii \ket{01})$ and $\ket{\phi_\bullet(n)}_{ M}$ are different states of the mediators $M=M_1 M_{2} M_{3}$ with exactly one excitation. The four dressed states $\ket{\Psi_{00}}_{ABM}, \ket{\Psi_{\pm}}_{ABM}$ and $\ket{\Psi_{11}}_{ABM}$  provide a microscopic description for the eigenstates  $\ket{00}_{AB}, \ket{\psi_\pm}_{AB}$ and $\ket{11}_{AB}$  of the nonlocal Hamiltonian  Eq.~(\ref{eq:HJCeff-B}).
Note that, contrary to the simple model described above, in the Jaynes-Cummings model the dressed states themselves are entangled across the $AB:M$ cut.

\section{Continuous field limit}
\label{sec:continuous-field}

The stroboscopic scenario makes the hybridization mechanism transparent but does not provide a continuous-time-translation-invariant interaction. To obtain such a description we now consider a scenario with a continuous flux of mediating bosons, depicted in Fig.~\ref{Fig:ABM-FF}(b), focusing on our simple controlled-displacement interactions. 

The simplest route to the continuous-field limit consists of taking the stroboscopic scenario to the limit with mediator modes arriving at a diverging rate $\tau\to 0$~\cite{Attal2003}. Again, it is useful to view the mediators as traveling in a waveguide, as shown in Fig.~\ref{Fig:Mediators-Schrodinger}, so that they can be localized in both space and time.
The flight time $T$ and distance $L$ between $A$ and $B$ are divided into $n$ intervals $\tau= T/n$ and $l=L/n$. We introduce a train of bosonic modes, described by ladder operators $a_i,a_i^\dag$, with $n$ modes in every segment of length $L$ (and hence $3n$ modes in the three-segment interaction region). At each time step a fresh bosonic mode enters the loop, all modes shift forward by one site, and the modes at the four interaction points undergo the controlled displacements of Eq.~(\ref{Eq:Interactions}), controlled on  the eigenvalues of the coupling operators $A=\sum_a\lambda_a\ketbra{a}_A$ and $B=\sum_b\lambda_b'\ketbra{b}_B$. When the systems are prepared in the respective eigenstates $\ket{a}_A$ and $\ket{b}_B$, the modes inside the waveguide have \emph{stationary} states determined by the corresponding eigenvalues $(\lambda_a,\lambda_b')$. As already mentioned (see Appendix~\ref{app:ContField}), each mediator completing the interaction loop returns to its initial state while acquiring a geometric phase $e^{\ii g^2\lambda_a\lambda_b'}$. 
Given the rate $\nicefrac{1}{\tau}$ at which mediators flow, this phase corresponds to the effective energy
\begin{equation}
E_{ab}
=-\frac{g^2}{\tau}\lambda_a\lambda_b'
=-\frac{n g^2}{T} \lambda_a\lambda_b'
\end{equation}
associated with the system state $\ket{a,b}_{AB}$.

\begin{figure}
 \includegraphics[width=\columnwidth]{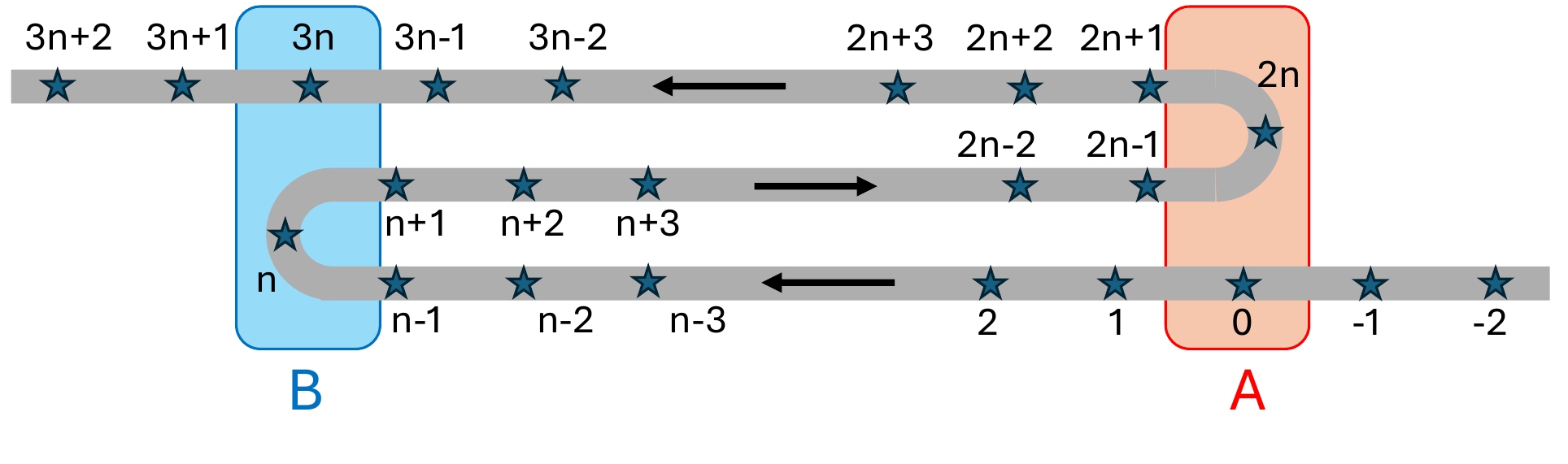}
\caption{
Spatial arrangement of systems $A$ (red) and $B$ (blue) and the mediator modes (stars). There are $n$ modes between $A$ and $B$. In one elementary time step the modes hop forward by one position, and those at positions $0$, $n$, $2n$, and $3n$ interact with $A$ or $B$.
}
\label{Fig:Mediators-Schrodinger}
\end{figure}

The continuum limit is reached by sending $n\to \infty$  and $g\to 0$ while keeping a fixed effective interaction strength $\mu:=\nicefrac{n g^2}{T} = \frac{g^2}{\tau}$. The energies of the eigenstates are $E_{ab}=-\mu\,\lambda_a \lambda_b'$, corresponding to the effective Hamiltonian 
\begin{equation}
H_{AB}^{\rm eff}=-\mu\,A  B.
\end{equation}
In the limit $n\to \infty$, we have 
 $\dd t = \frac{T}{n}$ and $\dd x = \frac{L}{n}$. The creation and annihilation field operators $[  a(x),  a^\dag(x')] =\delta(x-x')$ are given by $  a(x)= \frac{a_i}{\sqrt{\dd x}}$ with $x=\frac{i}{n} L$ and define a chiral bosonic field that propagates through the waveguide at the velocity $v=\nicefrac{L}{T}$. 
 
Specializing to the case where 
 the field is initially in the vacuum state $\ket{0}_{M}$, the stationary states of the whole system take the form
\begin{align}
&\ket{\Psi_{ab}}_{AB  M}
=
\ket{a,b}_{AB}\ket{\mathfrak F_{ab}}_{ M} \quad \text{with}
\\
&\ket{\mathfrak F_{ab}}_{ M}
\propto
\exp\!\left(
\kappa \int_0^{3L}  \! \! \! \!\dd x\, \zeta_{ab}\left[\nicefrac{x}{L}\right]\,   a^\dag(x)
\right)\ket{0}_{ M},
\end{align}
where $\kappa := \sqrt{\frac{\mu}{2 v }}$ and the coherent-field profile inside the waveguide is
\begin{equation}\label{eq:zeta-profile}
\zeta_{ab}(r)=
\begin{cases}
i\,\lambda_a & 0\le r<1,\\
i\,\lambda_a-\lambda_b' & 1\le r<2,\\
-\lambda_b' & 2\le r<3,
\end{cases}
\end{equation}
while the field outside remains in the vacuum. 

The continuous field construction extends naturally to any single-boson model of the form~(\ref{Eq:N=1VAB2-B}), because there is a basis in which the system states are unchanged by the system--mediator interactions. 

\section{Decoherence under local operations}
\label{sec:localDecoh}

\begin{figure*}[t]
\centering
\includegraphics[width=1.5\columnwidth]{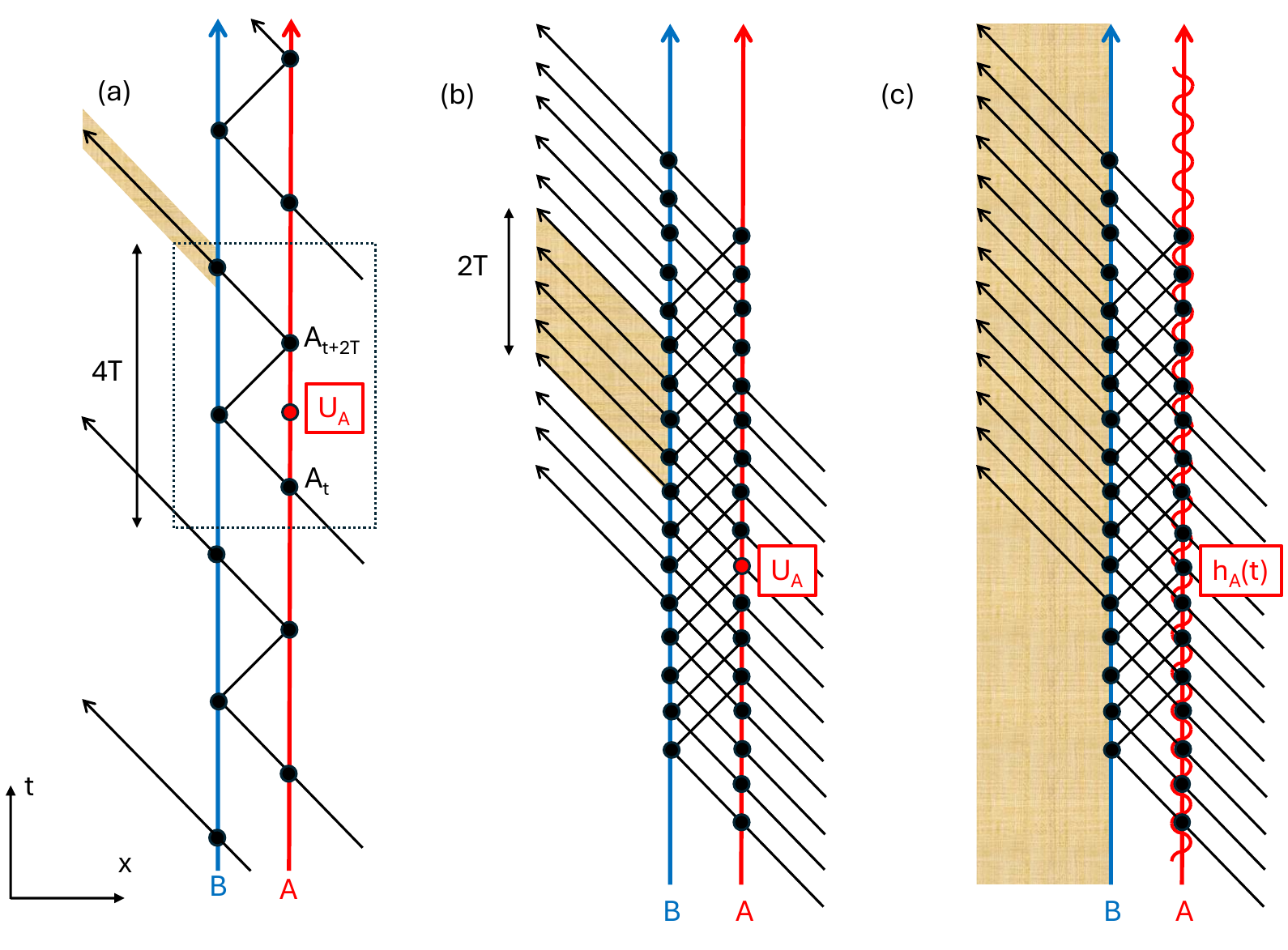}
\caption{Local control resolves the mediator dynamics. (a) In a discrete loop, arbitrary evolution of Alice's coupling observable between the two encounters is captured by the mismatch $\Delta A=A_{t+2T}-A_t$; sudden gates and slowly varying control are two limits of the same expression. (b) For a continuous mediator flux, an instantaneous gate is exactly tractable and emits a disturbed pulse of duration $2T$. (c) The most general setting combines a continuous mediator field with continuous local control $h_A(t)$.  In this case, a continuous flux of disturbed mediators is emitted. Its detailed analysis is left open. }
\label{Fig:ABM-Flux}
\end{figure*}

The effective interaction relies on destructive interference: the mediator's later encounters undo the displacements created by the earlier ones. Fast local operations on the systems change the coupling observable between these encounters and can prevent the loop from closing. We now discuss this effect.

We will assume that only Alice's system is subject to local manipulation. This can include sudden gates $U_A$ applied at specific times or time-dependent driving with a local Hamiltonian $h_A(t)$. The combination of their effects can be collected into a time-dependent unitary $u_A(t)$ describing the local evolution of Alice's system between time 0 and $t$ (in the absence of interaction with the mediator). Accordingly, its evolution between the times $t_1$ and $t_2\geq t_1$ is given by $u_A(t_2)u_A^\dag(t_1)$.

\subsection{Discrete mediator loops}
\label{Subsec:SingleMediator}

Consider a single interaction loop of Sec.~\ref{sec:single-boson} with the mediator entering at time $t$. In the Schr\"odinger picture let us express the global propagator at time $t+3T$ after the mediator completes its loop. Taking the local evolution of $A$ into account we find
\begin{align}
    V_{ABM}^{(t+3T)} = &e^{-\ii g B p} u_A(t+3T)u_A^\dag(t+2T) e^{-\ii g A x} \\
    &u_A(t+2T)u_A^\dag(t+T) e^{\ii g B p} \\
    &u_A(t+T)u_A^\dag(t) e^{\ii g A x} u_A(t).
\end{align}
The expression is cumbersome, and it is more convenient to represent the propagator in the interaction picture $\widetilde V_{ABM}^{(t+3T)} := u_A^\dag(t+3T) V_{ABM}^{(t+3T)}$ with Alice's coupling observable carrying an explicit time dependence. Introducing such dependence allows us to write
\begin{align}
&A_t:=u_A^\dagger(t)A\,u_A(t) \\
&\widetilde V_{ABM}^{(t+3T)}
=
e^{-igBp}\,e^{-igA_{t+2T}x}\,
e^{\ii gBp}\,e^{\ii gA_tx}.
\label{eq:controlled-loop}
\end{align}

One recognizes here that the crucial element is the change in Alice's observable 
\begin{equation}
    \Delta A_t:=A_{t+2T}-A_t 
\end{equation}
between the two encounters with the mediator. When $\Delta A_t=0$, Eq.~(\ref{eq:controlled-loop}) reduces to the closed loop of Sec.~\ref{sec:single-boson}. Otherwise an uncanceled displacement remains and the outgoing mediator carries information about $A$. When $\Delta A_t \neq 0$ the mediator does not generally close its displacement loop, it remains correlated with Alice's system and induces decoherence of the latter once it is traced out. 

Let us now discuss this effect in detail, assuming a small local interaction strength $g$ and the mediator starting in the vacuum state. 
The case of arbitrary $g$ with slowly varying control $u_A(t)$ is discussed in  Appendix \ref{app:pert}. In the interaction picture, the evolution of the system after one interaction loop is given by the channel $\mathcal{V}[\rho_{AB}] := \tr_M \widetilde V_{ABM}^{(t+3T)}( \rho_{AB}\otimes \ketbra{0}{0}_M )\widetilde V_{ABM}^{(t+3T)\dag}$. In Appendix \ref{app:pert} we show that it reads
\begin{align}
\mathcal{V}[\rho_{AB}]=  
\rho_{AB}
&+\ii g^2[A_{t+2T}B-H_A^{\rm L},\rho_{AB}]
\nonumber\\
&+\frac{g^2}{2}\mathcal D_{\Delta A_t}(\rho_{AB})
+O(g^3),
\label{eq:controlled-loop-map}
\end{align}
where
\begin{align}
H_A^{\rm L}&:=\frac{\ii}{4}[A_{t+2T},A_t],\\
\mathcal D_L(\rho)&:=L\rho L^\dag-\frac12\{L^\dag L,\rho\}.
\end{align}
The first term contains the effective interaction and a Lamb-shift-like correction. The second is the residual decoherence caused by the mismatch. For other initial mediator states the same mechanism remains, although its strength depends on the relevant quadrature uncertainty.

Let us now further assume that $u_A(t)$ stems from a local Hamiltonian $h_A(t)$ that is slowly varying, i.e.,~$h_A(t)\approx h_A(t+2T)$, and also slow on the timescale of the mediator flight time, i.e.,~$u_A(t+2T)u_A^\dag(t) \approx \id -\ii\, 2 T\, h_A(t)$. With $h_A^{\rm I}(t):=u_A^\dag(t)h_A(t)u_A(t)$, this allows us to write
\begin{align}
    \Delta A_t &\approx  2T\,  \ii [h_A^{\rm I}(t), A_t] = 2 T \,  \dot A_t \ .
\end{align}
Such slow control thus produces a dissipative correction governed by $T\dot A_t$.

Keeping the slow-control picture, let new mediators arrive at time intervals $\tau>2T$, so that their interaction loops are nonoverlapping. This recovers an effective reduced dynamics of systems $A$ and $B$, which is now generated by a Liouvillian rather than a Hamiltonian. Indeed, as $g^2$ is small and each channel $\mathcal{V}$ has an
effective short duration $\tau$, letting $\dot \rho_{AB} = \frac{\mathcal{V}[\rho_{AB}]-\rho_{AB}}{\tau}$ we can rewrite Eq.~(\ref{eq:controlled-loop-map}) in the form of a master equation
\begin{align}
    \dot \rho_{AB} = \ii \mu [A_{t+2T} B - H_A^L,\rho_{AB}] +2 \mu \, \mathcal{D}_{T \dot A_t}(\rho_{AB}),
\end{align}
with the usual convention $\mu = \frac{g^2}{\tau}$. The dissipation rate is thus controlled by $r \sim \mu T^2  \|\dot A_t\|^2$. In turn, to implement a desired local gate, the local Hamiltonian $h_A(t)$ has to be applied for a total duration $t_{\rm gate} \sim 1/ \| \dot A_t \|$. Thus, we find that during this process the accumulated strength of decoherence is roughly $t_{\rm gate}\, r \sim \mu T^2 \|\dot A_t\|\sim \frac{\mu T^2}{t_{\rm gate}}$, it increases with the mediator flight time and the effective interaction strength, and decreases with the gate duration. In this context, $t_{\rm gate} \gg \mu T^2$ is the condition under which the implementation of local gates does not induce decoherence.

\subsection{An instantaneous gate in the continuous-field model}

Switching to the continuous-field model of Sec.~\ref{sec:continuous-field}, we restrict our analysis to a single sudden gate $U_A=\sum_{a,a'}U_{a'a}\ketbra{a'}{a}_A$ applied at time $t=0$. Consider its effect on a stationary branch
$\ket{a,b}_{AB}\ket{\mathfrak F_{ab}}_{\bm M}$.  Immediately after the gate this state is transformed to
\begin{equation}
U_A\ket{a,b}_{AB}\ket{\mathfrak F_{ab}}_{M}
=
\sum_{a'}U_{a'a}\ket{a',b}_{AB}\ket{\mathfrak F_{ab}}_{\bm M}.
\label{Eq:perturbed}
\end{equation}
On the right-hand side, components with $a'\neq a$ are not stationary because the field dressing still corresponds to the pre-gate value $\lambda_a$. For a duration $2T$, the later displacement controlled by $\lambda_{a'}$ fails to cancel the earlier displacement controlled by $\lambda_a$, resulting in a coherent mediator field leaving the waveguide with an amplitude controlled by the displacement mismatch. For all $t\geq3T$, the field inside the waveguide has relaxed to the new dressing, while some information about the displacement mismatch encoded in a coherent pulse has left the waveguide. Up to a branch-dependent dynamical phase, each component $\ket{a',b}_{AB}\ket{\mathfrak F_{ab}}_{\bm M}$ has evolved to
\begin{align}
\ket{a',b}_{AB}
\ket{\mathfrak F_{a'b}}_{M_i}
\ket{\alpha_{aa'}(t)}_{ M_o},
\end{align}
where $ M_i$ and $ M_o$ denote the field inside and outside the waveguide, respectively. The outgoing pulse is
\begin{equation}
\ket{\alpha_{aa'}(t)}_{M_o}
\propto
\exp\!\left(
\ii\kappa\!\int_{vt}^{vt+2L}\!\dd x\,
\Delta\lambda_{aa'}\,a^\dagger(x)
\right)\ket{0}_{M_o},
\end{equation}
where
$\Delta\lambda_{aa'}:=\lambda_a-\lambda_{a'}$. The escaping field therefore records the change $\Delta \lambda_{aa'}$ in Alice's coupling eigenvalue across all transitions driven by $U_A$. The overlap of two outgoing pulses is
\begin{equation}
\braket{\alpha_{\bar a\bar a'}}{\alpha_{aa'}}
=
\exp\!\left[
-\frac{\mu T}{2}
\bigl(
\Delta\lambda_{aa'}
-\Delta\lambda_{\bar a\bar a'}
\bigr)^2
\right],
\end{equation}
and the decoherence strength is controlled by $\mu T/2$. Appendix~\ref{App:ContLocalGate} gives the full time-dependent field calculation.

\subsection{Continuous control in the continuous field}

The preceding result still idealizes Alice's operation as instantaneous. The most general case, represented in Fig.~\ref{Fig:ABM-Flux}(c), has a continuous local Hamiltonian $h_A(t)$ acting while a continuous field simultaneously propagates through the waveguide. In the interaction picture, the coupling observable becomes time-dependent, $A_t=u_A^\dagger(t)Au_A(t)$, and observables evaluated at different times need not commute. This makes a closed analytical formula challenging to obtain. Nonetheless, 
for sufficiently slow control, the decoherence strength is still controlled by $A_{t+2T}-A_t\simeq2T \dot A_t$.

\subsection{Decoherence in the JC model}

The decoherence mechanism in the Jaynes--Cummings model of Secs.~\ref{subsec-WeakInt} and~\ref{sec:JC-dressed} differs from the dephasing mechanism of the controlled-displacement model.  The JC dressed states have a fixed total excitation number, and their stationarity relies on destructive interference that prevents the field from carrying an excitation out of the interaction region. Therefore, even in the absence of local control, any initial state that is not stationary, will decay to a mixture of stationary states with less excitations. A fast local gate acts only on the atomic component and generally does not transform between dressed states. 
The resulting decoherence is therefore a relaxation to lower-excitation dressed states rather than dephasing in the dressed-state basis.

\section{Back to the paradoxes}
\label{sec:back-paradoxes}

The two paradoxes of the introduction have the same origin: $H_{AB}^{\rm eff}$ is a coarse-grained generator obtained after eliminating the mediator, not a microscopic interaction acting instantaneously on two separated systems. In the microscopic description every coupling is local, and any disturbance must propagate through the mediator before it can influence the remote system.

Consider first the signaling paradox. A local operation $U_A$ changes only Alice's system and the mediator degrees of freedom in her causal neighborhood. If $T=L/v$ is the one-way travel time from Alice to Bob, locality implies
\begin{equation}
\rho_B(t|U_A)=\rho_B(t|\id_A),\qquad 0<t<T.
\end{equation}
Alice's change can affect Bob only after the mediator has carried it to him. The immediate change of $\rho_B$ predicted by applying $e^{-\ii tH_{AB}^{\rm eff}}$ for arbitrarily small $t$ therefore lies outside the time scale on which the effective Hamiltonian is valid. In our mediator models the correction is explicit: after a local gate, the modified dressing propagates through the waveguide as a field disturbance carrying information about Alice's transition.

The apparent energy paradox is resolved by the same separation of time scales. In a microscopic local description, the work performed by a fast gate is the energy supplied by its local control apparatus to Alice's system and to nearby mediator degrees of freedom. It cannot depend instantaneously on a remote choice by Bob. What fails during a fast operation is the identification of this local work with the change of the coarse-grained quantity $\langle H_{AB}^{\rm eff}\rangle$, because the mediator contributions to the interaction energy have been eliminated from that quantity.

In the two-qubit example, Alice's $X_A$ gate and Bob's $X_B$ gate each disturb their local system--mediator configuration. These disturbances propagate and may later interfere or redistribute energy, but neither control can respond instantaneously to the distant choice. The fact that the final effective interaction energy is unchanged when both gates are applied is therefore a statement about the coarse-grained long-time $AB$ description, not a statement that Alice's microscopic work vanished. An explicit energy accounting would additionally require the control apparatus to be included in the microscopic Hamiltonian.

In summary, the effective interaction picture is valid when  local control changes the coupling observables little over the mediator flight time, $T\|\dot A\|\ll\|A\|$. When this condition is not obeyed decoherence and radiation reveal the presence of a causal microscopic interaction mechanism. Apparent nonlocal effects are replaced by causal processes.

\section{Outlook}
\label{sec:outlook}

We end with several open questions.

\emph{Unavoidable correlations and decoherence.} Does finite propagation time impose a model-independent lower bound on correlations and entanglement between the systems and mediator for a prescribed effective interaction? Similarly, can the decoherence produced by local operations be bounded in terms of the mediator travel time, effective interaction strength, and control rate? What are the microscopic models that minimize these effects?

\emph{Distinguishing microscopic models.} Different mediators and local couplings can generate the same $H_{AB}^{\rm eff}$ while producing different dressings, Lamb shifts, and decoherence under fast control. To what extent can measurements on $A$ and $B$ distinguish these microscopic ``UV completions'' of the same effective interaction?

\emph{Quantum-field models.} Can quantum-information-inspired microscopic models provide simple but physically faithful descriptions of interactions in quantum field theory?

\emph{Many-body extensions.} It would be natural to replace each interaction bond of a lattice model by a local propagating mediator channel. Such constructions could clarify how microscopic propagation, mediator correlations, and information leakage modify effective many-body dynamics and its relation to Lieb--Robinson bounds.

\begin{acknowledgments}
We thank Sandu Popescu, Eduardo Martinez and Alejandro Pozas-Kerstjens for insightful discussions.
\emph{Author contributions:} All authors contributed equally; authors are listed alphabetically.
\end{acknowledgments}

\bibliographystyle{apsrev4-2}
\bibliography{refs_2}

\clearpage

\appendix

\onecolumngrid

\section{Conditions for a catalytic mediator -- definitions and proofs} \label{app:catalytic}

We first give  precise definitions for the different kinds of catalytic mediation that were summarized in
Sec.~\ref{catalytic-main}. 

\subsection{State-dependent mediation.}

 At the weakest level, mediation may be catalytic
only when the mediator is prepared in a prescribed initial state
$\ket{\Psi_i}_M$.  Formally, state-dependent mediation means
\begin{equation}
    V_{ABM}\ket{\Psi_i}_M=U_{AB}\ket{\Psi_f}_M,
    \label{eq:state-dependent}
\end{equation}
where the equality is understood as an equality of maps on arbitrary
input states of $AB$.  Note that a final local unitary on $M$ can always return
$\ket{\Psi_f}_M$ to $\ket{\Psi_i}_M$.  For other initial mediator states, the final $AB:M$ state will generally not factorize.

\subsubsection{State-independent mediation}
\label{Subsub:State-independmed}

State-independent mediation instead requires factorization for every
initial mediator state, and hence
\begin{align}
V_{ABM}=U_{AB}\otimes U_M.
\label{eq:catalytic-general}
\end{align}
At the level of arbitrary local unitaries, this condition may be reduced
without loss of generality to
\begin{align}
V_{ABM}=U'_{AB}\otimes\id_M,
\label{eq:catalytic-general-2}
\end{align}
with $U'_{AB}$ interaction only, meaning that $\log U'_{AB}$ is
orthogonal to all local terms $A\otimes\id_B$ and $\id_A\otimes B$ in the
Hilbert--Schmidt product.  Indeed, a general bipartite unitary can be written \cite{AlekseevskyNikonorov2009}
\begin{equation}
U_{AB}=U_AU_BU'_{AB}U'_AU'_B,
\end{equation}
where $U_A,U_B,U'_A,U'_B$ are local.  These local factors, together
with $U_M$, can be absorbed into the four unrestricted $AM$ and $BM$
unitaries.

\subsubsection{Parameter-tuned mediation}

Independently of the initial-state requirement, catalyticity may hold
only at one tuned interaction strength.  A state-independent example is the
three-SWAP construction (for identical $A$, $B$, and $M$),
\begin{align}
V_{ABM}
&={\rm SWAP}_{AM}{\rm SWAP}_{BM}{\rm SWAP}_{AM}\nonumber\\
&={\rm SWAP}_{AB}\otimes\id_M.
\end{align}
It implements one gate but has no weak-$g$ regime that can be
interpreted as evolution under an effective Hamiltonian.  

\subsubsection{Effective Hamiltonians}

To obtain an effective Hamiltonian, 
we suppose that the interactions can be written as
\begin{align}
 V_{ABM} (g) &= e^{\ii g H_{BM}^{(2)}} e^{\ii g H_{AM}^{(2)}} 
 e^{\ii g H_{BM}^{(1)}} e^{\ii g H_{AM}^{(1)}} 
 \label{Eq:VABMIntOnly}
\end{align}
and that
mediator decoupling holds throughout an interval around
$g=0$.  

State-dependent interval mediation requires
\begin{equation}
V_{ABM}(g)\ket{\Psi_i}_M
=U_{AB}(g)\ket{\Psi_f(g)}_M
\quad\forall g\in[-\delta,+\delta],
\label{Eq:VABM(g)-initial}
\end{equation}
whereas the state-independent exact-return condition is
\begin{equation}
V_{ABM} (g)   =U_{AB} (g)\otimes  \id_M \quad \forall g \in [-\delta , + \delta] .
\label{Eq:VABM(g)-2}
\end{equation}

In what follows, we consider the state-independent exact-return condition,
and require that the system-mediator Hamiltonians $H_{XM}^{(i)}$, $X=A,B$, $i=1,2$, are interaction only.

We note that the absorption argument leading from Eq.~(\ref{eq:catalytic-general}) to
Eq.~(\ref{eq:catalytic-general-2}) concerns unrestricted local unitaries.  In the
fixed-Hamiltonian setting of Eq.~(\ref{Eq:VABMIntOnly}), absorbing a
$g$-dependent $U_M(g)$, $U_A(g)$, or $U_B(g)$  into one local Hamiltonian generally produces a
$g$-dependent system or mediator-only term.  It therefore need not
preserve the interaction-only Hamiltonians that we assume. Thus Eq.~(\ref{Eq:VABM(g)-2}) defines the
exact-return subclass used below.

\subsection{Order-by-order conditions}

We consider instances of \itsimei
of the form
\begin{align}
 V_{ABM} (g) &= e^{\ii g H_{BM}^{(2)}} e^{\ii g H_{AM}^{(2)}} 
 e^{\ii g H_{BM}^{(1)}} e^{\ii g H_{AM}^{(1)}} 
\end{align}
  We  expand $ V_{ABM} (g)$ in powers of $g$ 
\begin{align}\label{eq:V-exp}
 V_{ABM} (g)= \id + \sum_{n=1}^\infty g^n\,  V_{ABM}^{(n)} .
\end{align}

We impose the catalytic condition order by order. At order $k$, we  write
\begin{align}
 V_{ABM} (g) &=e^{-\ii H_{AB}^{(k)}(g)}  \otimes \id_M + O(g^{k+1})
 \label{Eq:VABMIntOnly-BB}
\end{align}
where
\begin{align}
H_{AB}^{(k)}(g) & =\sum_{n = 1 }^k g^n h_{AB}^{(n)}   
\label{Eq:hABexpansion}
\end{align}
with
 \begin{align}
 h_{AB}^{(n)} \text{ interaction only } \forall n.
\label{Eq:Interactiononly}
 \end{align}
These conditions then lead to constraints on 
$H_{AM}^{(1,2)}$,  $H_{B M}^{(1,2)}$ which we now derive.

\begin{proposition}
For \itsimei\ that satisfies
 Eqs. (\ref{Eq:VABMIntOnly})--(\ref{Eq:Interactiononly}),
to first order in $g$ we obtain the condition that
\begin{align}
H_{AM}^{(1)} &= H_{AM} + K_M \qquad H_{AM}^{(2)} = -H_{AM} + K_M\label{Eq:HAM}\\
H_{BM}^{(1)}  & = H_{BM}-K_M \qquad  H_{BM}^{(2)} = -H_{BM} - K_M\label{Eq:HBM}\\
&\qquad \qquad \quad h_{AB}^{(1)} = 0 
\label{Eq:CatlyticCdt-1}
\end{align}
with $K_M$ acting only on the mediator system. 
\end{proposition}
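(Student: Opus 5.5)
Expanding both sides of the exact-return condition to first order in $g$ gives $V^{(1)}_{ABM}=\ii\,(H_{AM}^{(1)}+H_{BM}^{(1)}+H_{AM}^{(2)}+H_{BM}^{(2)})$, since to first order the product of four exponentials is just the sum of the generators. On the other side, Eq.~(\ref{Eq:VABMIntOnly-BB}) with $k=1$ gives $-\ii\,h_{AB}^{(1)}\otimes\id_M$. The first-order condition is therefore
\begin{equation}
H_{AM}^{(1)}+H_{AM}^{(2)}+H_{BM}^{(1)}+H_{BM}^{(2)} = -h_{AB}^{(1)}\otimes \id_M .
\end{equation}
The plan is to decompose this identity in the Hilbert--Schmidt orthogonal decomposition of operators on $\mathcal H_A\otimes\mathcal H_B\otimes\mathcal H_M$ into sectors labelled by the subset of factors on which an operator acts nontrivially (traceless part). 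Components in different sectors are linearly independent.

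The left-hand side has components only in the sectors $\{A\},\{B\},\{M\},\{A,M\},\{B,M\}$: each $H_{XM}^{(i)}$ lives on $X\otimes M$ and can be split into its $X$-only, $M$-only and genuinely $XM$ parts. The right-hand side lives only in $\{A,B\}$, because $h_{AB}^{(1)}$ is interaction only (Eq.~(\ref{Eq:Interactiononly})). No left-hand term has an $\{A,B\}$ component, so comparing that sector immediately forces $h_{AB}^{(1)}=0$.

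By the interaction-only hypothesis, the $A$-only and $B$-only parts of each $H_{XM}^{(i)}$ vanish. The remaining sectors must then cancel separately.
\begin{itemize}
\item In the $\{A,M\}$ sector, the genuine $AM$ parts must satisfy $H_{AM}^{(1)}+H_{AM}^{(2)}=0$ there. Define $H_{AM}$ as the $AM$ part of $H_{AM}^{(1)}$; then the $AM$ part of $H_{AM}^{(2)}$ is $-H_{AM}$.
\item In the $\{B,M\}$ sector, the same argument defines $H_{BM}$.
\item In the $\{M\}$ sector, the condition is only that the four mediator-only pieces sum to zero. This gives exactly one linear constraint, not four.
\end{itemize}

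Matching the stated form requires that these four mediator-only pieces be $K_M$, $K_M$, $-K_M$, $-K_M$. This is the main obstacle. The first-order sum rule alone allows any four pieces $k_1+k_2+k_3+k_4=0$, which is a two-parameter family beyond the stated one. I would try to close the gap in two ways.

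First, I would read ``interaction only'' strictly, so that mediator-only terms are excluded from each $H^{(i)}_{XM}$. Then all $k_i=0$, and the statement holds with $K_M=0$.

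Second, if mediator-only terms are allowed, I would push to second order. At order $g^2$, a commutator among the mediator-only pieces, such as $[k_1+k_3,\,k_2+k_4]$, is the only contribution in the $\{M\}$ sector. It must vanish, since the target $U_{AB}\otimes\id_M$ has no mediator-only part. Terms like $[H_{AM},k_i]$ land in the $\{A,M\}$ and $\{B,M\}$ sectors, and the state-independent requirement of exact mediator return forces them to cancel as well. I expect this to pin the pieces to the form $(K_M,K_M,-K_M,-K_M)$, possibly up to a redefinition absorbed into $H_{AM}$ and $H_{BM}$. If that fails, I would state the result as holding modulo the residual freedom $\sum_i k_i=0$.
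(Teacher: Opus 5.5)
Your first-order identity and the sector argument forcing $h_{AB}^{(1)}=0$ are fine and agree with the paper. The gap is the ``main obstacle'' you identify, and it is self-inflicted. You defined $H_{AM}$ as the genuinely two-body ($\{A,M\}$-sector) part of $H_{AM}^{(1)}$, but the proposition puts no such restriction on $H_{AM}$ or $H_{BM}$; it only requires $K_M$ to act on $M$ alone.

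The paper sets $K_M:=\tfrac12(H_{AM}^{(1)}+H_{AM}^{(2)})$, $H_{AM}:=\tfrac12(H_{AM}^{(1)}-H_{AM}^{(2)})$ and $H_{BM}:=\tfrac12(H_{BM}^{(1)}-H_{BM}^{(2)})$. With these definitions $H_{AM}^{(1,2)}=\pm H_{AM}+K_M$ holds by construction. The only real content is then the following. Once $h_{AB}^{(1)}=0$, the first-order condition reads $H_{AM}^{(1)}+H_{AM}^{(2)}=-(H_{BM}^{(1)}+H_{BM}^{(2)})$. An operator on $AM$ that equals an operator on $BM$ must have the form $\id_{AB}\otimes K_M$, which gives both the mediator-only character of $K_M$ and the relations for $B$.

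In your notation, the two-parameter freedom $k_1+k_2+k_3+k_4=0$ is absorbed exactly. Take $K_M=\tfrac12(k_1+k_2)=-\tfrac12(k_3+k_4)$, put $\tfrac12(k_1-k_2)$ into $H_{AM}$ and $\tfrac12(k_3-k_4)$ into $H_{BM}$. Any $A$-only or $B$-only pieces are absorbed the same way, so you need not assume them away.

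The second-order rescue you propose cannot work. With your restrictive definition of $H_{AM}$, the pattern $(K_M,K_M,-K_M,-K_M)$ is simply false. Take $H_{AM}^{(1)}=k$, $H_{AM}^{(2)}=-k$ and $H_{BM}^{(1,2)}=0$, with $k\neq0$ mediator-only. Then $V_{ABM}(g)=e^{-\ii gk}e^{\ii gk}=\id$ for all $g$, so every order is satisfied, yet the mediator-only pieces are $(k,-k,0,0)$.

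A nontrivial instance is $H_{AM}^{(1)}=-H_{AM}^{(2)}=A\otimes x+\id_A\otimes p$ with $H_{BM}^{(1)}=-H_{BM}^{(2)}=B\otimes p$. This still gives exactly $e^{\ii g^2AB}\otimes\id_M$, yet in your decomposition its mediator-only pieces are $(p,-p,0,0)$.

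Your strict reading, with all $k_i=0$, does give a valid proof, but only of the $K_M=0$ case. The paper introduces that case afterwards as a separate hypothesis, so the proposition itself is meant to allow $K_M\neq0$.
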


\begin{proof}
At first order in $g$ we find
\begin{equation}
    V_{ABM} = \id + \ii g ( H_{AM}^{(1)}+H_{AM}^{(2)}+H_{BM}^{(1)}+ H_{BM}^{(2)}) +O(g^2).
\end{equation}
The right hand side is catalytic for the mediator iff $H_{AM}^{(1)}+H_{AM}^{(2)}+H_{BM}^{(1)}+ H_{BM}^{(2)}=- h_{AB}^{(1)}$. 
Since the left hand side is a sum of an operator acting on $AM$ and an operator acting on $BM$, while  
$h_{AB}^{(1)}$ is interaction only on $AB$, we must have $h_{AB}^{(1)}=0$. Therefore we have
$H_{AM}^{(1)}+H_{AM}^{(2)}= - \left( H_{BM}^{(1)}+ H_{BM}^{(2)}\right) $.
Since the left hand side acts on systems $A$
 and $M$, while the right hand side acts on systems $B$ and $M$, equality implies that in fact they only act on system $M$, and we can define
 $K_M :=(H_{AM}^{(1)}+H_{AM}^{(2)})/2= - \left( H_{BM}^{(1)}+ H_{BM}^{(2)}\right)/2$. 
This then implies 
Eqs. (\ref{Eq:HAM}) and (\ref{Eq:HBM}) for  $H_{AM}:= (H_{AM}^{(1)}-H_{AM}^{(2)})/2$ and $H_{BM}:=(H_{BM}^{(1)}-H_{BM}^{(2)})/2$.

\end{proof}

For simplicity of the analysis, from now on we make the further hypothesis that 
 \begin{align}
 H_{AM}^{(1,2)} \text{ and }
  H_{BM}^{(1,2)}
 \text{are interaction only }
\label{Eq:IntonlyHABM}
 \end{align}
 which implies that 
 \begin{equation}
     K_M=0.
 \end{equation}

\begin{proposition}
For \itsimei\ satisfying
Eqs.~(\ref{Eq:VABMIntOnly})--(\ref{Eq:Interactiononly})
and~(\ref{Eq:IntonlyHABM}), catalyticity to second order requires
the relevant mediator operators to have scalar cross-commutators.
A nontrivial interaction $h_{AB}^{(2)}\neq0$ therefore requires an
infinite-dimensional mediator. A canonical realization of these
relations, adopted below but not claimed to be exhaustive, uses a
finite number $N$ of bosonic modes. In this realization the
system--mediator interactions have the form
\begin{align}
H_{AM} &=  \sum_{i=1}^N (\hat A_i \otimes a_i +h.c.) \nonumber\\
 H_{BM} &= \sum_{i=1}^N (\hat B_i \otimes a_i +h.c.),
 \label{Eq:HAMHBM}
\end{align}
where the $a_i, a_i^\dag$ are bosonic annihilation and creation operators acting on the mediator space with $[a_i,a_j^\dag]=\delta_{ij}$, and the $\hat A_i , \hat B_i$ are arbitrary traceless operators acting on systems $A$ and $B$, respectively. 
The effective Hamiltonian to order $g^2$ then has the form
\begin{align}
h_{AB}^{(2)} 
&= 
\ii \sum_{i=1}^N ( \hat A_i \hat B_i^\dag - \hat A_i^\dag \hat B_i) .
\label{Eq:HhAB(2)}
\end{align}
\end{proposition}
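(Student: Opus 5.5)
Having established $K_M=0$, the plan is to expand $V_{ABM}(g)$ to second order using $H_{AM}^{(1)}=-H_{AM}^{(2)}=H_{AM}$ and $H_{BM}^{(1)}=-H_{BM}^{(2)}=H_{BM}$, and match it against $e^{-\ii g^2 h_{AB}^{(2)}}\otimes\id_M$. We write the product as $e^{-\ii gH_{BM}}e^{-\ii gH_{AM}}e^{\ii gH_{BM}}e^{\ii gH_{AM}}$. This is a group commutator, so by BCH it equals $\exp\!\big(g^2[H_{BM},H_{AM}]+O(g^3)\big)$. The order-$g^2$ condition is therefore
\begin{equation}
[H_{BM},H_{AM}]=-\ii h_{AB}^{(2)}\otimes\id_M,
\end{equation}
which is exactly Eq.~(\ref{eq:commutator-main}). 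The $O(g^2)$ squares of the individual exponentials cancel pairwise, so the commutator is the only surviving term.

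Next I will unpack the commutator condition. Expand $H_{AM}=\sum_k \alpha_k\otimes M_k$ and $H_{BM}=\sum_l \beta_l\otimes N_l$, with $\{\alpha_k\}$ and $\{\beta_l\}$ linearly independent traceless operators on $A$ and $B$, and $M_k,N_l$ operators on $M$. Since $A$ and $B$ commute, the commutator is $\sum_{k,l}\alpha_k\beta_l\otimes[N_l,M_k]$. The operators $\alpha_k\beta_l$ are linearly independent on $AB$, so the condition forces each $[N_l,M_k]$ to be proportional to $\id_M$. These are the ``scalar cross-commutators'' of the statement.

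Infinite-dimensionality then follows from the trace argument. If $\dim M<\infty$, then $\tr[N_l,M_k]=0$, so every scalar $c_{kl}$ with $[N_l,M_k]=c_{kl}\id_M$ vanishes, which gives $h_{AB}^{(2)}=0$. Hence a nontrivial interaction requires an infinite-dimensional mediator, and in that setting the natural objects with scalar commutators are canonical pairs.

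For the realization, take $N$ modes and $M$-operators linear in $a_i,a_i^\dag$, as in Eq.~(\ref{Eq:HAMHBM}). Using $[a_i,a_j]=0$ and $[a_i,a_j^\dag]=\delta_{ij}$, compute $[H_{BM},H_{AM}]=\sum_i(\hat B_i\hat A_i^\dag-\hat B_i^\dag\hat A_i)\otimes\id_M$. Multiplying by $\ii$ yields Eq.~(\ref{Eq:HhAB(2)}). Hermiticity of $h_{AB}^{(2)}$ is immediate. It is interaction-only because the $\hat A_i,\hat B_i$ are traceless, so their products have no overlap with local terms in the Hilbert--Schmidt product.

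The main obstacle is the step from the commutator condition to scalar commutators. This needs a careful choice of operator bases so that linear independence applies. One subtlety is that $M_k$ and $N_l$ may be unbounded in the infinite-dimensional case, so the expansion has to be understood on a suitable domain, and the trace argument is only used in the finite-dimensional case. Since the statement explicitly does not claim exhaustiveness of the bosonic form, no classification beyond this is needed.
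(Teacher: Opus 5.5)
Your route is the paper's: second-order expansion, linear independence of product operators forcing scalar cross-commutators $[M_k,\widetilde M_\ell]\propto\id_M$, the trace argument in finite dimensions, and a direct commutator computation for bosonic modes. However, the sign bookkeeping in the first and last steps is wrong.

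With $X=\ii gH_{AM}$ and $Y=\ii gH_{BM}$, your product is $e^{-Y}e^{-X}e^{Y}e^{X}=\exp([Y,X]+\dots)$. Here $[Y,X]=(\ii g)^2[H_{BM},H_{AM}]=-g^2[H_{BM},H_{AM}]$, so you dropped the factor $\ii^2$. The correct expansion is $V_{ABM}(g)=\id+g^2[H_{AM},H_{BM}]+O(g^3)$. Matching with $\id-\ii g^2h_{AB}^{(2)}\otimes\id_M$ gives $[H_{BM},H_{AM}]=+\ii\,h_{AB}^{(2)}\otimes\id_M$, i.e.\ $\ii[H_{AM},H_{BM}]=h_{AB}^{(2)}\otimes\id_M$. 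The condition you wrote is the negative of this, not ``exactly'' the paper's equation.

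Your bosonic commutator $[H_{BM},H_{AM}]=\sum_i(\hat B_i\hat A_i^\dag-\hat B_i^\dag\hat A_i)\otimes\id_M$ is correct. But multiplying it by $\ii$ gives $\ii\sum_i(\hat A_i^\dag\hat B_i-\hat A_i\hat B_i^\dag)=-\ii\sum_i(\hat A_i\hat B_i^\dag-\hat A_i^\dag\hat B_i)$, which is the negative of the claimed $h_{AB}^{(2)}$. Followed consistently, your derivation therefore yields the opposite sign of the effective interaction. You reach the stated formula only because a second misidentification cancels the first. With the correct sign, you multiply by $-\ii$ and obtain the stated $h_{AB}^{(2)}$ directly. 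The sign matters because it is part of the statement and fixes the sign of $H_{AB}^{\rm eff}$ downstream.

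A smaller omission: you posit the form $H_{AM}=\sum_i(\hat A_i\otimes a_i+\mathrm{h.c.})$ rather than deriving it. The paper starts from mediator operators that are arbitrary linear combinations $M_k=\sum_i(\alpha_i^{(k)}a_i+\beta_i^{(k)}a_i^\dag)$. It then uses Hermiticity, $H_{AM}=\tfrac12\sum_k(A_k\otimes M_k+A_k^\dag\otimes M_k^\dag)$, to collect the coefficient of $a_i$ into $\hat A_i=\tfrac12\sum_k(\alpha_i^{(k)}A_k+\bar\beta_i^{(k)}A_k^\dag)$. This is traceless because the $A_k$ are, and likewise for $B$. That short step is what justifies saying the interactions take this form in the bosonic realization. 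The rest of your argument is fine, including your extra check that $h_{AB}^{(2)}$ is interaction-only.
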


\begin{proof}
Setting $K_M$ to zero in Eqs. (\ref{Eq:HAM}), (\ref{Eq:HBM}), inserting this into the left hand side of Eq.  (\ref{Eq:VABMIntOnly}), and expanding 
to second order in $g$, we obtain
\begin{align}
    V_{ABM} (g) 
    &= \id + g^2 [H_{AM}, H_{BM}]+O(g^3).
\end{align}
Hence, at this order,  the interaction is catalytic for the mediator to second order in $g$ if
\begin{equation}\label{eq:commutator}
   \ii \,[H_{AM},H_{BM}] =   h_{AB}^{(2)} \otimes \id_{M}.
\end{equation}

We expand the local Hamiltonians as
\begin{align}
H_{AM} &=  \sum_k A_k \otimes M_k \label{Eq:HAMHBM-z2}\\
H_{BM} &= \sum_\ell B_\ell \otimes \widetilde M_\ell
\label{Eq:HAMHBM-z}
\end{align}
where the operators $A_i,B_i,M_i$ and $\widetilde M_i$ act on systems $A,B$, and $M$, respectively, 
and $\{A_k\}_k$, $\{B_\ell\}_\ell$ constitute an orthonormal basis of traceless operators. (There are no terms proportional to $\id_A$ or $\id_B$ in Eqs. (\ref{Eq:HAMHBM-z2}) (\ref{Eq:HAMHBM-z}) in view of the hypothesis Eq. (\ref{Eq:IntonlyHABM})).
 The relation ~(\ref{eq:commutator}) then takes the form 
\begin{align}
- \ii  \left(  h_{AB}^{(2)} \otimes \id_{M} \right) =
   &  \sum_{k,\ell } A_k \otimes B_\ell \otimes [M_k,\widetilde M_\ell].
   \label{Eq:h(2)AB}
\end{align}
Since $\{  A_k \otimes B_\ell\}_{k,\ell}$ are linearly independent, 
this identity requires that the commutator of every pair of mediator operators be scalar: $[M_k,\widetilde M_\ell]= c_{k,\ell}\, \id_M$ with $c_{k,\ell}\in\mathds{C}$.
If $M$ is finite dimensional, then we can take the trace of this relation, obtaining that 
all constants $c_{jk} =0$. Thus a nonzero second order interaction requires infinite dimensional mediators. 

The constraint that commutators
$[M_k , \widetilde M_\ell]$  are scalars 
suggests that the mediator can be most simply realized by
a Heisenberg-type algebra~\cite{bourbaki1989lie}.
From now on we make this hypothesis, and suppose that 
the mediator operators can be represented as linear combinations of
independent bosonic creation  and annihilation operators.
For finite-dimensional systems $A$ and $B$,
only a finite number $N$ of bosonic modes are needed. 
 We can thus write
\begin{align}
M_k &=  \sum_{i=1}^N \alpha_{i}^{(k)} a_i + \beta_{i}^{(k)} a_i^\dag
\nonumber
\\
\widetilde  M_\ell &=  \sum_{i=1}^N \alpha_{i}^{(\ell)} a_i + \beta_{i}^{(\ell)} a_i^\dag
\label{Eq:Mk} \\
&{\text{with }} \quad [a_i,a_j]=
[a_i^\dag , a_j^\dag]=0, 
 \quad 
[a_i, a_j^\dag]=\delta_{ij} .
\label{Eq:aadagger}
\end{align}

Furthermore,  using the fact that $H_{AM}$ is Hermitian, we obtain 
\begin{align}
H_{AM} &= \frac{1}{2} \sum_{k} (A_k \otimes M_k + A_k^\dag \otimes M_k^\dag) \\& =
\frac{1}{2} \sum_k \sum_{i=1}^N (A_k \otimes (\alpha_{i}^{(k)} a_i + \beta_{i}^{(k)} a_i^\dag ) 
\nonumber\\
& \quad \quad
+ A_k^\dag \otimes (\bar \alpha_{i}^{(k)} a_i^\dag + \bar \beta_{i}^{(k)} a_i)) \\
& =\sum_{i=1}^N \left(\sum_k  \frac{\alpha_i^{(k)} A_k + \bar \beta_i^{(k)} A_k^\dag}{2}\right)  \otimes a_i + \mathrm{h.c.} \\
& := \sum_{i=1}^N (\hat A_i \otimes a_i +h.c.)
\end{align}
where we have defined new operators $\hat A_i =\frac12 \sum_k  {\alpha_i^{(k)} A_k + \bar \beta_i^{(k)} A_k^\dag}$.
Similarly, we obtain the decomposition $H_{BM}=\sum_i (\hat B_i \otimes a_i +h.c.)$.
The operators $\hat A_i$ and $\hat B_i$ are traceless since the operators $A_k, B_\ell$ are traceless.
Direct computation of the commutator $[H_{AM},H_{BM}] $ using Eq. (\ref{Eq:aadagger})
then yields Eq. (\ref{Eq:HhAB(2)}).
\end{proof}

Under the same assumptions, consider the third order expansion of the propagator
\begin{align}
V_{ABM}(g)&=
   e^{- \ii g H_{BM} } 
 e^{- \ii g  H_{AM} } 
 e^{\ii g H_{BM} } 
 e^{\ii g H_{AM} } \\
 &= \id + g^2 [H_{AM}, H_{BM}]
\nonumber\\
&\quad 
 -\ii \frac{g^3}{2} [H_{AM}+H_{BM},[H_{AM}, H_{BM}]] +
 O(g^4).
\end{align}
Up to third order, state-independent mediation therefore requires
\begin{align}
    \Delta_{ABM}
    &:=[H_{AM}+H_{BM},[H_{AM},H_{BM}]]
    \nonumber\\
    &=-\ii[H_{AM}+H_{BM},h_{AB}^{(2)}]
    =\Delta_{AB}\otimes\id_M ,
\end{align}
where the third-order effective Hamiltonian coefficient is
$h_{AB}^{(3)}=\Delta_{AB}/2$.

Explicitly, using (\ref{Eq:HhAB(2)}), we have
\begin{align}
   \Delta_{ABM}
   =&
   \sum_i a_i \sum_j 
   \left( [\hat A_i, \hat A_j]\hat B_j^\dag
   - [\hat A_i, \hat A_j^\dag] \hat B_j
   \right. \nonumber\\
   & \left.
+ [\hat B_i,\hat B_j^\dag]\hat A_j
- [\hat B_i, \hat B_j]\hat A_j^\dag
   \right)
   +h.c. \ .
\end{align}

Let us now give a sufficient condition for $V_{ABM}(g)$ to correspond to an effective Hamiltonian to all orders in $g$. We do not aim to give minimal conditions, but rather a natural sufficient form.

\begin{proposition}\label{Prop3}
If $V_{ABM}(g)$ is given by 
\begin{align}
 V_{ABM} (g)= e^{- \ii g H_{BM} } 
 e^{- \ii g  H_{AM} } 
 e^{\ii g H_{BM} } 
 e^{\ii g H_{AM} } 
 \label{Eq:HABM-22B}
 \end{align}
with
$H_{AM}$ and $H_{BM}$ given by 
\begin{align}
H_{AM} &= \sum_i (\hat A_i \otimes a_i +h.c.) \nonumber\\
 H_{BM} &= \sum_i (\hat B_i \otimes a_i +h.c.)
 \label{Eq:HAMHBM-3}
\end{align}
and let
\begin{align}
h_{AB}^{(2)}  = \ii \sum_i( \hat A_i \hat B_i^\dag - \hat A_i^\dag \hat B_i) .
\end{align}
Then, if
\begin{align}\label{eq:third-order-catalytic-F}
  [\hat A_j,  h_{AB}^{(2)} ] &= 0 \qquad \forall j, \nonumber\\
   [\hat B_j,  h_{AB}^{(2)} ] &= 0 \qquad \forall j, 
\end{align}
we have
\begin{equation}
V_{ABM}(g)= \exp(-\ii g^2 h_{AB}^{(2)}) \otimes \id_M,
\label{eq:third-order-catalytic-G}
\end{equation}
i.e. the  interaction is given by the Hamiltonian $ H^{\rm eff}_{AB} = g^2h_{AB}^{(2)}$.
\end{proposition}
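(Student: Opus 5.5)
The plan is to use the fact that $C:=[H_{AM},H_{BM}]$ is central in the algebra generated by $H_{AM}$ and $H_{BM}$. In that case the group commutator of exponentials collapses to a single exponential, exactly as the Weyl relation did in Eq.~(\ref{eq:VABM}).

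\emph{Step 1: compute $C$ and identify it with $h_{AB}^{(2)}$.} Using $[a_i,a_j^\dag]=\delta_{ij}$, $[a_i,a_j]=0$, and the fact that operators on $A$ commute with operators on $B$,
\begin{equation}
[H_{AM},H_{BM}] = \sum_{i}\big(\hat A_i\hat B_i^\dag - \hat A_i^\dag \hat B_i\big)\otimes\id_M \,+\, \sum_{i,j}\big(\text{terms } \hat A_i\hat B_j\,[a_i,a_j] \text{ etc.}\big).
\end{equation}
Each operator product splits as, e.g., $\hat A_i a_i\hat B_j^\dag a_j^\dag-\hat B_j^\dag a_j^\dag \hat A_i a_i=\hat A_i\hat B_j^\dag[a_i,a_j^\dag]$, because the system factors commute. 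Hence $C=-\ii\, h_{AB}^{(2)}\otimes\id_M$, which is Eq.~(\ref{eq:commutator}) re-derived for this realization.

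\emph{Step 2: show that the hypotheses make $C$ commute with $H_{AM}$ and $H_{BM}$.} We have $[H_{AM},C]=-\ii\sum_i\big([\hat A_i,h^{(2)}_{AB}]\otimes a_i+[\hat A_i^\dag,h^{(2)}_{AB}]\otimes a_i^\dag\big)$. The condition $[\hat A_j,h^{(2)}_{AB}]=0$ kills the first term. For the second term, Hermiticity of $h^{(2)}_{AB}$ gives $[\hat A_j^\dag,h]=-[\hat A_j,h]^\dag=0$. The same argument applies to $H_{BM}$ using the $\hat B_j$ condition.

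\emph{Step 3: derive the closed formula.} With $C$ commuting with both $X:=\ii gH_{AM}$ and $Y:=\ii gH_{BM}$, the truncated BCH formula applies: $e^{Y}e^{X}=e^{X}e^{Y}e^{[Y,X]}$, where $[Y,X]=g^2[H_{AM},H_{BM}]=g^2C$. Then
\begin{equation}
V_{ABM}(g)=e^{-Y}e^{-X}e^{Y}e^{X}=e^{-Y}e^{-X}e^{X}e^{Y}e^{g^2C}=e^{g^2C}=e^{-\ii g^2h^{(2)}_{AB}}\otimes\id_M,
\end{equation}
which is Eq.~(\ref{eq:third-order-catalytic-G}).

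\emph{Main obstacle.} The obstacle is rigor, because $H_{AM}$ and $H_{BM}$ are unbounded. The BCH identity cannot be applied naively as a power series. I would instead argue directly with the operator-valued function $F(s)=e^{-sY}\,\cdot\,e^{sY}$. Since $[Y,X]$ is central, $e^{-sY}Xe^{sY}=X-s[Y,X]$. Exponentiating gives $e^{-Y}e^{X}e^{Y}=e^{X-g^2C}$, and this equals $e^{X}e^{-g^2C}$ because $C$ is central. This can be justified on the dense domain of finite-particle vectors tensored with the system space, where the operators are analytic. Alternatively, I would note that for each joint eigenspace of the commuting family each $H_{XM}$ is a controlled displacement generator, so the Weyl relations apply exactly. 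Finally, the sign has to be checked carefully against Eq.~(\ref{Eq:HhAB(2)-3}) so that the prefactor comes out as $e^{-\ii g^2 h_{AB}^{(2)}}$.
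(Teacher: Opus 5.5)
Your proof is correct and is essentially the paper's argument: the hypotheses, together with Hermiticity of $h_{AB}^{(2)}$, make $[H_{AM},H_{BM}]=-\ii\, h_{AB}^{(2)}\otimes\id_M$ commute with both generators, so the BCH expansion of the group commutator truncates to $e^{[Y,X]}=e^{-\ii g^2 h_{AB}^{(2)}}\otimes\id_M$; your explicit commutator computation and the conjugation identity $e^{-sY}Xe^{sY}=X-s[Y,X]$ fill in details and domain issues that the paper treats only formally. Drop your alternative ``controlled displacement on joint eigenspaces'' remark, because for $N>1$ the hypotheses do not force the $\hat A_i$ to commute with one another, whereas the conjugation argument needs only that the bounded operator $h_{AB}^{(2)}$ commutes with $H_{AM}$ and $H_{BM}$.
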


\begin{proof}
We recall the Baker-Campbell-Hausdorff series of the group commutator~\cite{bourbaki1989lie} 
\begin{align}\label{eq:BCH}
   &  e^{-Y}e^{-X}e^{Y}e^{X} =e^Z \nonumber\\
& \text{with}
    \qquad Z= -[X,Y] + \frac{1}{2}[X+Y,[X,Y]]+\dots,
\end{align}
where the missing terms are higher order commutators of $X$ and $Y$ with $[X,Y]$.

Recall that $h_{AB}^{(2)} =\ii [H_{AM},H_{BM}]  $.
Note that since $ h_{AB}^{(2)} $ is Hermitian,
Eq. (\ref{eq:third-order-catalytic-F})
also 
implies that
$  [\hat A_j^\dag,  h_{AB}^{(2)} ] = [\hat B_j^\dag,  h_{AB}^{(2)} ] = 0$
for all $j$.
Therefore, when
 Eq. (\ref{eq:third-order-catalytic-F}) holds,
all higher nested commutators in the BCH expansion (\ref{eq:BCH})  vanish, thus Eq. (\ref{eq:third-order-catalytic-G}) holds.

\end{proof}

\begin{proposition}
Under the conditions of Prop. \ref{Prop3}, 
in the  case where there is a single mediator mode $N=1$, we have
\begin{align}
H_{AM} &= \sum_i (x_i a + \bar x_i a^\dag) \ketbra{i}_A
\quad x_i \in \mathbb{C} ,
 \label{Eq:N=1HAM}\\
H_{BM}&= \sum_j (y_j a + \bar  y_j a^\dag) \ketbra{j}_B 
\quad y_j \in \mathbb{C} , \label{Eq:N=1HBM} \\
    h_{AB}^{(2)} &=
\sum_{ij}
\ii
\left(
x_i\bar y_j-\bar x_i y_j
\right)
|ij\rangle\langle ij|.
 \label{Eq:N=1hAB2}\\
    V_{ABM} (g)&= \exp(-\ii g^2 h_{AB}^{(2)}) \nonumber\\
    &= \sum_{ij} e^{   g^2 (x_i \bar  y_j-\bar  x_i y_j )} \ketbra{ij}_{AB} \otimes \id_M .\label{Eq:N=1VAB2}
\end{align}
Because $\hat A$ and $\hat B$ are traceless,
$\sum_i x_i=\sum_j y_j=0$.
The simple model in Eq. (\ref{Eq:Interactions}) is obtained when the $x_i$ are real and the $y_j$ are purely imaginary.
\end{proposition}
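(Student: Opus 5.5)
With $N=1$, Prop.~\ref{Prop3} specializes to $H_{AM}=\hat A\otimes a+\hat A^\dag\otimes a^\dag$ and $H_{BM}=\hat B\otimes a+\hat B^\dag\otimes a^\dag$, with $h_{AB}^{(2)}=\ii(\hat A\hat B^\dag-\hat A^\dag\hat B)$. The plan is to show that the sufficient conditions (\ref{eq:third-order-catalytic-F}) are naturally met when $\hat A$ and $\hat B$ are \emph{normal} operators. In that case they are simultaneously diagonal with their adjoints, which gives exactly the forms (\ref{Eq:N=1HAM})--(\ref{Eq:N=1HBM}). The remaining claims are then a direct computation.

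First I will argue that $\hat A$ and $\hat A^\dag$ commute. The condition $[\hat A,h_{AB}^{(2)}]=0$ together with its adjoint $[\hat A^\dag,h_{AB}^{(2)}]=0$, noted in the proof of Prop.~\ref{Prop3}, must hold for the product structure $h_{AB}^{(2)}=\ii(\hat A\otimes\hat B^\dag-\hat A^\dag\otimes\hat B)$. Expanding gives
$$[\hat A,\hat A]\otimes\hat B^\dag-[\hat A,\hat A^\dag]\otimes\hat B=-[\hat A,\hat A^\dag]\otimes\hat B=0.$$
Since $\hat B\neq0$ (otherwise the interaction is trivial), this forces $[\hat A,\hat A^\dag]=0$. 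The same argument applied to $[\hat B,h_{AB}^{(2)}]=0$ gives $[\hat B,\hat B^\dag]=0$. This is the main step: the commutator conditions reduce to normality of a single operator on each side, and this only works because $N=1$ gives a single tensor product term.

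Next I will diagonalize. A normal operator is unitarily diagonalizable, so I write $\hat A=\sum_i x_i\ketbra{i}_A$ and $\hat B=\sum_j y_j\ketbra{j}_B$ with $x_i,y_j\in\mathbb C$. Then $\hat A^\dag=\sum_i\bar x_i\ketbra{i}$, and substituting into $H_{AM}$ and $H_{BM}$ gives (\ref{Eq:N=1HAM}) and (\ref{Eq:N=1HBM}). Tracelessness of $\hat A$ and $\hat B$ gives $\sum_i x_i=\sum_j y_j=0$. Inserting the diagonal forms into $h_{AB}^{(2)}$ yields $\sum_{ij}\ii(x_i\bar y_j-\bar x_iy_j)\ketbra{ij}$, which is (\ref{Eq:N=1hAB2}). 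This operator is diagonal in the product basis, so it commutes with every operator diagonal in that basis, in particular with $\hat A,\hat A^\dag,\hat B,\hat B^\dag$. Hence the hypotheses of Prop.~\ref{Prop3} hold, and $V_{ABM}(g)=e^{-\ii g^2h_{AB}^{(2)}}\otimes\id_M$. Exponentiating the diagonal operator, with $-\ii\cdot\ii=1$, gives the phases $e^{g^2(x_i\bar y_j-\bar x_iy_j)}$ of (\ref{Eq:N=1VAB2}). Each exponent is $2\ii\,\mathrm{Im}(x_i\bar y_j)$, so these are indeed phases.

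Finally I will recover the simple model (\ref{Eq:Interactions}). With $x_i$ real, $x_i(a+a^\dag)=\sqrt2\,x_i\,x$. With $y_j=\ii\eta_j$ and $\eta_j$ real, $\ii\eta_j(a-a^\dag)=\sqrt2\,\eta_j\,p$, using $x=(a+a^\dag)/\sqrt2$ and $p=-\ii(a-a^\dag)/\sqrt2$. Identifying $A$ with $\sqrt2\sum_ix_i\ketbra{i}$ and $B$ with $\sqrt2\sum_j\eta_j\ketbra{j}$, up to the sign conventions of (\ref{Eq:Interactions}), gives the controlled-displacement unitaries. The phase becomes $g^2(x_i\bar y_j-\bar x_iy_j)=-2\ii g^2x_i\eta_j$, which matches $e^{\ii g^2AB}$ up to that same sign and normalization convention.
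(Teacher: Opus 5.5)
Your proposal is correct and follows the paper's proof essentially step by step. The identities $[\hat A,h_{AB}^{(2)}]=-\ii[\hat A,\hat A^\dag]\otimes\hat B$ and $[\hat B,h_{AB}^{(2)}]=\ii\hat A\otimes[\hat B,\hat B^\dag]$ force $\hat A$ and $\hat B$ to be normal, diagonalization then gives the remaining formulas, and your extra check that the diagonal $h_{AB}^{(2)}$ satisfies the hypotheses of Prop.~\ref{Prop3} is redundant but harmless. There is one small sign slip in your last paragraph: with $p=-\ii(a-a^\dag)/\sqrt2$ one has $\ii\eta_j(a-a^\dag)=-\sqrt2\,\eta_j\,p$, so $B=-\sqrt2\sum_j\eta_j|j\rangle\langle j|$, and the phase $e^{-2\ii g^2x_i\eta_j}$ is then exactly the $|ij\rangle$ eigenvalue of $e^{\ii g^2AB}$, with no sign convention left to absorb.
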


\begin{proof}

In the case of a single mediator mode, Eq. (\ref{Eq:HAMHBM-3}) implies that $H_{AM}= a \hat A+ a^\dag \hat A^\dag$ and $H_{BM}= a \hat B+ a^\dag \hat B^\dag$.
We therefore have
\[
h_{AB}^{(2)}
=
\ii(\hat A\hat B^\dag-\hat A^\dag\hat B).
\]
The assumptions of Prop.~\ref{Prop3} imply
\[
[\hat A,h_{AB}^{(2)}]
=
-\ii[\hat A,\hat A^\dag]\otimes \hat B=0,
\]
and
\[
[\hat B,h_{AB}^{(2)}]
=
\ii \hat A\otimes[\hat B,\hat B^\dag]=0.
\]
For nontrivial $\hat A$ and $\hat B$, this gives
\[
[\hat A,\hat A^\dag]=0,
\qquad
[\hat B,\hat B^\dag]=0.
\]
Hence $\hat A$ and $\hat B$ are normal and, since they are finite dimensional, can be diagonalized:
\[
\hat A=\sum_i x_i\ketbra{i}_A,
\qquad
\hat B=\sum_j y_j\ketbra{j}_B,
\]
 for some complex coefficients $x_i$ and $y_j$. This implies 
 Eqs. (\ref{Eq:N=1HAM}) and (\ref{Eq:N=1HBM}). Direct computation using Eq. (\ref{Eq:HhAB(2)}) then yields the right hand side of Eq. (\ref{Eq:N=1hAB2}). Since we are in the conditions of  Prop. \ref{Prop3}, we have the right hand side of Eq. (\ref{Eq:N=1VAB2}).
\end{proof}

\section{The continuous field model} \label{app:ContField}

In this appendix we derive the continuous field model. We start with a discrete stroboscopic model and then take the limit of infinitely many modes. 

\subsection{Stroboscopic model}

We consider a sequence of mediators that follow each other at very short intervals, and interact successively with systems $A$, $B$, again $A$, and again $B$. To this end we consider the waveguide loop in Fig.~\ref{Fig:Mediators-Schrodinger} and divide each segment of length $L$ and the flight time $T$ into $n$ elementary steps of duration $\tau=T/n$. As depicted in Fig.~\ref{Fig:Mediators-Schrodinger}, we label 
 each small spatial section by the integer
 index $i$. To each section $i$ we  associate a bosonic mode described by ladder operators  $ a_i,   a_i^\dag$ with $[ a_i,  a_i^\dag]=1$. Only the modes $0$, ${n}$, ${2n}$ and ${3n}$  interact with  systems $A$ and $B$. 
Each time step corresponds to the combination of the 
shift operator
\begin{equation}
K:   a_i\mapsto K   a_i K^\dag=  a_{i+1}
\end{equation}
and the
interactions described by 
\begin{align}
W_0
&:=
U_{AM_0}^{(+)}
U_{BM_{n}}^{(+)}
U_{AM_{2n}}^{(-)}
U_{BM_{3n}}^{(-)} \quad \text{with}\\
U_{AM_i}^{(\pm)} &= \exp(\pm \ii g \, x_i\otimes A)=\exp(\pm \ii \bar g \,( a_i^\dag +  a_i) \otimes A) \label{eq:app-UA}
\\ 
U_{BM_i}^{(\pm)} &= \exp(\pm \ii g \,  p_i\otimes B)= \exp(\pm \ii \bar g \,(\ii   a_i^\dag - \ii  a_i) \otimes B), \label{eq:app-UB}
\end{align}
where $  x_i = \frac{ a_i^\dag+ a_i}{\sqrt{2}}$ and $ p_i = \ii \frac{  a_i^\dag -  a_i}{\sqrt{2}}$ are the canonical position and momentum operators of the mode $i$, and for convenience we have introduced the rescaled interaction  parameter 
\begin{equation}\bar g:=\nicefrac{g}{\sqrt{2}}.\end{equation}
An elementary evolution step is given by the product 
\begin{equation}
U_{step} = K W_0 ,
\end{equation}
so after $N$ time steps the evolution is generated by the propagator $(K W_0 )^N$.

  To further analyze the dynamics, we introduce
 the spectral decompositions of the  operators $A=\sum_a\lambda_a\ketbra{a}_A$, $B=\sum_b\lambda_b'\ketbra{b}_B$. The local interactions can then be expressed as  controlled displacements
\begin{align}
U_{AM_i}^{(\pm)}&=\sum_a \mathcal{D}_{i}\!\left(\pm\ii\lambda_a \bar g\right)\otimes\ketbra{a}_A, \nonumber\\
U_{BM_i}^{(\pm)}&=\sum_b \mathcal{D}_{i}\!\left(\mp \lambda_b' \bar{g}\right)\otimes\ketbra{b}_B, \label{eq:controlled-disp}
\end{align}
where $\mathcal{D}_{i}(\alpha):=\exp(\alpha\,   a_i^\dag-\alpha^*  a_i)$.

The dynamics of the bosonic modes thus consists of hopping along the incoming waveguide and undergoing controlled displacements at sites $i=0,n,2n, 3n$.
 It remains to compute the result of these controlled displacements.
The product of two displacements $\mathcal D(\beta) \mathcal D(\alpha)=e^{\ii \, \operatorname{Im}[\beta \alpha^*]}\mathcal{D} (\alpha+\beta)$ is a displacement times a phase.
One can view the phase ${\rm Im}[\beta \alpha^*] = (\alpha_{\rm Re}\beta_{\rm Im} -\alpha_{\rm Im}\beta_{\rm Re})$ as the geometric phase given by twice the (oriented) area enclosed by the successive displacements, seen as vectors in $\mathds{R}^2$.
We therefore have:
\begin{align}
  \mathcal{D}(\ii \bar g\lambda_a)  &= \mathcal{D}(\ii \bar g\lambda_a)  \nonumber\\
  \mathcal{D}(-\bar g\lambda_b') \mathcal{D}(\ii \bar g\lambda_a)  &=
      e^{\ii \bar g^2 \lambda_a \lambda_b'}\mathcal{D}(\bar g (\ii \lambda_a-\lambda_b'))\nonumber\\
     \mathcal{D}(-\ii \bar g\lambda_a) \mathcal{D}(-\bar g\lambda_b') \mathcal{D}_i(\ii \bar g\lambda_a) &=  e^{2 \ii \bar g^2 \lambda_a \lambda_b'}\mathcal{D}(-\bar g\lambda_b')
     \nonumber\\
  \mathcal{D}(\bar g\lambda_b')  \mathcal{D}(-\ii \bar g\lambda_a) \mathcal{D}(-\bar g\lambda_b') \mathcal{D}_i(\ii \bar g\lambda_a)
  &=    e^{\ii 2 \bar g^2 \lambda_a \lambda_b'} 
  \label{Eq:RelDisp}
\end{align}
In the last line of Eq. (\ref{Eq:RelDisp}), the
 sequence of four controlled displacements traces a rectangle in phase space whose enclosed area produces a pure phase.
 Thus, after the bosonic mode has exited the system following the fourth displacement, it returns to its original state, leaving only a phase corresponding to an interaction between systems $A$ and $B$.
The evolution is thus catalytic for the mediator.
Taking into account that the duration of an elementary step is $ T  /n$, 
we can reinterpret the phase $e^{\ii 2 \bar g^2 \lambda_a \lambda_b'} $ as corresponding to 
 the effective interaction (using $2\bar g^2 = g^2$)
\begin{equation}
 H_{AB}^{\rm eff}  = -\frac{n g^2}{T} A B .
\label{Eq:Heff}
\end{equation}

Let us suppose that the mediators all arrive in the same state $\ket{\phi_0}$, and denote the product
$\ket{\Phi_0}_M=\bigotimes_i \ket{\phi_0}_{M_i}$.
Let us also suppose that the systems $A$ and $B$ are in eigenstates $\ket{ a}\otimes \ket{b}$ of the coupling operators. Then
the following state (we include phases for easier reference to Eq. (\ref{Eq:RelDisp}))
\begin{align}
    \ket{\Psi^0_{ab}}_{ABM} &=
  \left( \prod_{i=1}^{n}     \mathcal{D}_i(\ii \bar g\lambda_a)   \right)
\left( 
\prod_{i=n+1}^{2n}
 e^{\ii \bar g^2 \lambda_a \lambda_b'}\mathcal{D}_i(\bar g (\ii \lambda_a-\lambda_b'))  \right)
\left(  \prod_{i=2n+1}^{3n}
       e^{2 \ii \bar g^2 \lambda_a \lambda_b'}\mathcal{D}_i(-\bar g\lambda_b') \right)
       \ket{\Phi_0}_M \otimes \ket{a} \otimes \ket b
       \label{Eq:Stationary}
      \end{align} 
is stationary:
\begin{align}
   KW_0     \ket{\Psi^0_{ab}}_{ABM} & =      e^{2 \ii \bar g^2 \lambda_a \lambda_b'} 
       \ket{\Psi^0_{ab}}_{ABM} .
\end{align}
This is the exact dressed state corresponding to an eigenstate of the stroboscopic dynamics, corresponding to the eigenstate $\ket{a,b}$ of the Hamiltonian in Eq.~(\ref{Eq:Heff}) in the effective picture.

A particularly simple case is when the mediators arrive in the vacuum state $\ket 0$. A displacement acting on the vacuum state is a coherent state denoted 
$ \mathcal{D} (\alpha)  \ket 0 =: \ket \alpha $. Then, the stationary state reads
\begin{align}
    \ket{\Psi^0_{ab}}_{ABM} &=
     e^{\ii 3 n \bar g^2 \lambda_a \lambda_b'}
  \left(    \prod_{i \leq 0}     \ket{0}_i \right)
\left( \prod_{i=1}^{n}
\ket{ \ii \bar g\lambda_a}_i  \right)
\left( 
\prod_{i=n+1}^{2n}
\ket{ \bar g (\ii \lambda_a-\lambda_b') }_i  \right)
\left(  \prod_{i=2n+1}^{3n}
  \ket{ -\bar g\lambda_b'}_i \right)
 \left(   \prod_{i>3n}     \ket{0}_i   \right) \otimes \ket{a} \otimes \ket b .
       \label{Eq:Stationary-B}
      \end{align} 

\subsection{The continuum limit}

The continuum limit, depicted in Fig.~\ref{Fig:ABM-FF}(b), is obtained by letting the number of discrete modes $n$ in each section of the waveguide tend to infinity. We see already from Eq.~(\ref{Eq:Heff}) that a finite effective coupling energy $E_{ab}= -\mu \lambda_a \lambda_b'$ is obtained by increasing $n$ while scaling
 $g=\sqrt{\frac{\mu T}{n}}$ with $\mu$ fixed.

Let us introduce the spatial coordinate $x$ along the waveguide of length $3L$, and the infinitesimal length $\dd x = \frac{L}{n}$ and the infinitesimal  time $\dd t = \frac{T}{n}$. The mode index $i$ can then be put in correspondence with the spatial coordinate $x=\frac{i}{n} L$. This allows us to define the creation and annihilation field operators  $ a^\dag(x) = \frac{ a_i^\dag}{\sqrt{\dd x}}$  and $ a(x) = \frac{ a_i}{\sqrt{\dd x}}$ satisfying the commutation relation 
\begin{equation}
    [ a(x),  a^\dag(x')] = \frac{\delta_{ii'}}{\dd x} \to \delta(x-x').
\end{equation}
For instance the total number of bosons in a waveguide segment is then given by the operator
\begin{equation}
    \sum_{i=1}^{n}  a_i^\dag  a_i = \sum_{i=1}^{n}{\dd x} \frac{  a_i^\dag   a_i}{\dd x}   \to \int_0^L \dd x \,  a^\dag(x)  a(x).
\end{equation}

For the interaction parameter $g=\sqrt{\frac{\mu T}{n}} = \sqrt{\mu \, \dd t}$ the elementary interactions (Eqs.~\ref{eq:app-UA} and \ref{eq:app-UB}) of each discrete mode with the systems $A$ or $B$ have duration $\dd t$ and can be written as
\begin{align}
  U_{AM_i}^{(\pm)} &= \exp(\pm \ii g \,\frac{ a_i^\dag +  a_i}{\sqrt{2}} \otimes A)  = \exp(\pm \ii \sqrt{\frac{\mu \dd t \dd x}{2 }} \,( a^\dag(x) + a(x)) \otimes A) \\&= \exp(\pm \ii \, \kappa \, \dd x \,( a^\dag(x) + a(x)) \otimes A)
      = \exp(\pm \ii \ v \kappa \, \dd t \,( a^\dag(x) + a(x)) \otimes A)\\
     U_{BM_i}^{(\pm)} &= \exp(\mp \kappa \, \dd x \,( a^\dag(x) -  a (x)) \otimes B)=\exp(\mp v \kappa \, \dd t \,(a^\dag(x) -  a(x)) \otimes B).
\end{align}
Here  $\kappa := \sqrt{\frac{\mu}{2 v}}$ is the coupling constant and $v= \dd x/\dd t = L/ T$  the velocity at which the field travels through the waveguide. The instantaneous infinitesimal local displacements of the field at locations $x \in\{0,L,2L,3L\}$ become
\begin{align} 
\mathcal{D}_{0} &= e^{\ii \kappa \lambda_a \dd x(a^\dag(0)+  a(0)) } &\qquad \mathcal{D}_{2L} &= e^{-\ii \kappa \lambda_a \dd x(  a^\dag(2L)+  a(2L)) }
\\
\mathcal{D}_{L} &= e^{- \kappa \lambda_b' \dd x(  a^\dag(L)-  a(L)) } &\qquad  \mathcal{D}_{3L} &= e^{ \kappa \lambda_b' \dd x(  a^\dag(3L)-  a(3L)) }.
\end{align}
The global elementary interactions $W_0=U_{AM_0}^{(+)} U_{BM_{n}}^{(+)} U_{AM_{2n}}^{(-)}
U_{BM_{3n}}^{(-)} $ can now be seen as a continuous dynamics $W_0=\exp(-\ii \dd t (H_{A\bm M}+H_{B\bm M}))$ generated by the local Hamiltonians
\begin{align}
H_{A\bm M} &= -v\kappa \big(  a(0)-  a(2L)+\mathrm{h.c.}\big)\otimes A\qquad 
\nonumber\\ \text{and}\qquad
 H_{B\bm M} &=  v\kappa \big(\ii   a(L)-\ii   a(3L)+\mathrm{h.c.}\big)\otimes B.
\end{align}

In addition, the field is propagating through the waveguide with velocity $v$, as given by the translation operator $T_t   a(x) T_t^\dag =    a(x+v t)$. Therefore, for a time $0<t\leq T$ the evolution of the system is given by the following propagator 
\begin{align}\label{eq:field-propagator}
    V_t = &\sum_{a,b}\ketbra{a,b}_{AB}  \nonumber\\ 
    &\otimes  e^{\ii \kappa \lambda_a \int_{0}^{vt} \dd x\left(   a^\dag (x)+  a(x)\right) - \kappa \lambda_b' \int_{L}^{L+vt}  \dd x \left(   a^\dag (x)-  a(x)\right) -\ii \kappa \lambda_a \int_{2L}^{2L+vt} \dd x \left(   a^\dag (x)+  a(x)\right) +\kappa \lambda_b' \int_{3L}^{3L+vt}  \dd x \left(   a^\dag (x)-  a(x)\right)} T_t\ .
\end{align}
Let us now take the limit $n\to \infty$ in Eq.~(\ref{Eq:Stationary-B}) to obtain the stationary states in the continuum picture when the initial state of the mediator is the vacuum state:
\begin{align}
 \ket{\Psi_{ab}}_{AB M} &= 
    \ket{a,b}_{AB}  \ket{\mathfrak{F}_{ab}}_{ M} \nonumber\\   
 \ket{\mathfrak{F}_{ab}}_{  M}&= 
 \exp( \ii \nicefrac{3}{2} \mu T \lambda_a \lambda_b' )
  \exp(\kappa \int_0^{3L} \dd x\Big(   a^\dag(x) \zeta_{ab}[x] - \text{h.c.}\Big) )\ket{0}_{  M} \nonumber\\
 &=   \exp ( i \nicefrac{3}{2} \mu T \lambda_a \lambda_b' )
    {\exp(-\frac{\kappa^2}{2} \int_0^{3L}  \dd x  \,
    |\zeta_{ab}[x]|^2 )} \exp(\kappa \int_0^{3L} \dd x \,   a^\dag(x) \zeta_{ab}[x] )\ket{0}_{ M}.
\end{align}
where
\begin{align} \label{eq:amplitude-disp}
    \zeta_{ab}[x] := \begin{cases}
      \ii \lambda_a & 0< x\leq L \\
        \ii \lambda_a - \lambda_b' & L< x\leq 2L \\
       -  \lambda_b' & 2 L< x\leq 3 L \\
        0 & \text{otherwise}
   \end{cases} 
\end{align}

Finally, in the Heisenberg picture the freely translated field is
$T_t^\dagger a(x)T_t=a(x-vt)$ and therefore obeys the right-moving
convective derivative
$\partial_+a(x,t):=(\partial_t+v\,\partial_x)a(x,t)$. Including its
interaction with systems $A$ and $B$ gives the Heisenberg equation
\begin{equation}
   \partial_+a(x,t)
   =\ii[H_{A\bm M}(t)+H_{B\bm M}(t),a(x,t)].
\end{equation}

\subsection{Effect of a fast local gate}
\label{App:ContLocalGate}

We consider a system that is initially, for $t<0$, in a superposition of stationary states
\begin{equation}
     \ket{\Psi}_{AB  M} = \sum_{a,b} c_{ab}\,  e^{\ii \mu \lambda_{a} \lambda_b' t}  \ket{a,b}_{AB} \ket{\mathfrak{F}_{ab}}_{ M}.
\end{equation}
At $t=0$, an operator $U_A= \sum_{a',a} \mathcal{A}_{a'a} \ketbra{a'}{a}_A$ is applied to system $A$.
The change propagates through the waveguide at velocity $v$.
After a duration $3T$ the system settles down to a superposition of new stationary states
$ \ket{\Psi_{a'b}}_{AB M} $ each evolving with its phase $e^{\ii \mu \lambda_{a'} \lambda_b' t} $. 
The dressing of the state does not change instantaneously. Rather, the difference between the two dressed states is emitted as a traveling coherent pulse of duration $2T$ which carries information about the transition $\ket{a}\to \ket{a'}$ and gives rise to decoherence of the $AB$ system; see Fig.~\ref{Fig:ABM-Flux}(b). Let us describe these effects in detail.

At $t=0_+$, immediately after the action of $U_A$, the state becomes
\begin{equation}
     \ket{\Psi'}_{AB  M} =  \sum_{a',a,b} \mathcal{A}_{a'a} c_{ab} \ket{a',b}_{AB} \ket{\mathfrak{F}_{ab}}_{  M}.
\end{equation}
To understand the evolution of this state, we consider each component $\ket{a',b}_{AB} \ket{\mathfrak{F}_{ab}}_{\bm M}$ separately. Notice that for $a\neq a'$ this state is not stationary. Instead, its evolution 
can be obtained
by multiplying with  the propagator Eq.~(\ref{eq:field-propagator}):
\begin{equation}
\ket{a',b}_{AB} \ket{\mathfrak{F}_{ab}^{a'}(t)}_{ M}:=
    V_t \ket{a',b}_{AB} \ket{\mathfrak{F}_{ab}}_{ M} .
\end{equation}
 For $t>0$, the displacement of the field at locations $x\in\{0,2L\}$ is controlled by the value $\lambda_{a'}$. Thus, for $0< t \leq 2T$, at $x=2L$ the infinitesimal displacement of the field mode $a(2L)$ by $-\ii \kappa \lambda_{a'} \dd x$ no longer cancels the displacement of $\ii \kappa \lambda_a \dd x$ applied to the same mode at $t-2 T$. 

Concretely, to describe the state of the field after the application of the gate let us introduce the compact notation
\begin{align}
  \ket{\alpha}_{M[x_0,x_1]} := \exp(\int_{x_0}^{x_1}\dd x (\alpha  a^\dag(x)- \alpha^*   a(x)))\ket{0}_{M[x_0,x_1]} = e^{-|\alpha|^2\frac{x_1-x_0}{2}}\exp(\int_{x_0}^{x_1}\dd x \, \alpha  a^\dag(x))\ket{0}_{M[x_0,x_1]} ,
\end{align}
describing coherent fields in the coordinate interval $[x_0,x_1]$. At $t=0$ the state of the field is 
\begin{align}
     \ket{\mathfrak{F}_{ab}^{a'}(0)}_{M}  = &e^{-\ii \nicefrac{3}{2} E_{ab} T} 
        \ket{\ii\kappa \lambda_{a}}_{ M[0,L]}
        \ket{\kappa(\ii \lambda_a - \lambda_b')}_{M[L ,2L]}
        \ket{-\kappa\lambda_b'}_{ M[2L,3L]}
        \ket{0}_{ M[0,3L]^c},
\end{align}
where $E_{ab} = -\mu \lambda_a \lambda_b' $
and $\ket{0}_{\bm M[0,3L]^c}$ denotes the vacuum state outside the interval $[0,3L]$.

For $0<t\leq T$ the state of the field is found to be
\begin{align}
    \ket{\mathfrak{F}_{ab}^{a'}(t)}_{ M} = &e^{-\ii \nicefrac{3}{2} E_{ab}T} e^{-\ii \frac{E_{ab}+E_{a'b}}{2} t}
        \ket{\ii\kappa \lambda_{a'}}_{M[0,v t]}
        \ket{\ii\kappa \lambda_a}_{\bm M[v t,L]}
        \ket{\kappa(\ii \lambda_a - \lambda_b')}_{M[L ,2L]}
        \nonumber\\
        &\ket{\kappa(\ii \Delta \lambda_{aa'}-\lambda_b')}_{ M[2L,2L+v t]}
        \ket{-\kappa\lambda_b'}_{ M[2L+v t,3L]}
        \ket{0}_{ M[0,3L]^c},
\end{align}
where $\Delta \lambda_{aa'} =  \lambda_a-\lambda_{a'}$.
At $t=T$ this simplifies to the following 
\begin{align}
    \ket{\mathfrak{F}_{ab}^{a'}(T)}_{ M} = &e^{-\ii \nicefrac{3}{2}  E_{ab} T}  e^{-\ii \frac{E_{ab}+E_{a'b}}{2} T}
        \ket{\ii\kappa \lambda_{a'}}_{ M[0,L]}
        \ket{\kappa(\ii \lambda_a - \lambda_b')}_{M[L ,2L]}
        \ket{\kappa(\ii \Delta \lambda_{aa'}-\lambda_b')}_{M[2L,3L]}
        \ket{0}_{ M[0,3L]^c}.
\end{align}
For later times $T<t \leq 2T$ the influence of the transition $\ket{a}\to \ket{a'}$ reaches the locations $x\in\{L,3L\}$, and we find 
\begin{align}
\ket{\mathfrak{F}_{ab}^{a'}(2T)}_{ M}
={}&e^{-\ii\frac{3}{2}E_{ab}T}
e^{-\ii2TE_{a'b}}
\ket{\ii\kappa\lambda_{a'}}_{ M[0,L]}
\ket{\kappa(\ii\lambda_{a'}-\lambda_b')}_{M[L,2L]}
\nonumber\\
&\otimes
\ket{\kappa(\ii\Delta\lambda_{aa'}-\lambda_b')}_{ M[2L,3L]}
\ket{\ii\kappa\Delta\lambda_{aa'}}_{ M[3L,4L]}
\ket{0}_{ M[0,4L]^c}.
\end{align}
Finally, at time $t=3T$ all field modes inside the waveguide have been controlled by the state $\ket{a'}$ of system $A$, leading to
\begin{align}
 \ket{\mathfrak{F}_{ab}^{a'}(3T)}_{ M}
 &=e^{-\ii T(E_{ab}+2E_{a'b})}
   \ket{\mathfrak{F}_{a'b}}_{ M}
   \ket{\ii\kappa\Delta\lambda_{aa'}}_{M[3L,5L]}
   \ket{0}_{ M[0,5L]^c}
   \nonumber\\
 &=e^{\ii\mu T\Delta\lambda_{aa'}\lambda_b'}
   e^{-\ii3TE_{a'b}}
   \ket{\mathfrak{F}_{a'b}}_{\bm M}
   \ket{\ii\kappa\Delta\lambda_{aa'}}_{ M[3L,5L]}
   \ket{0}_{ M[0,5L]^c}.\label{eq app: state final cont perp}
\end{align}

For all $t\geq3T$, the waveguide and systems $A$ and $B$ are again in the stationary state $\ket{\mathfrak{F}_{a'b}}_{M} $, which will continue to evolve with the energy $E_{a'b}$. However, the field $\ket{\ii\kappa \Delta \lambda_{aa'}}_{ M[3L,5L]}$ that left the interaction region carries away some information about the transition $\ket{a}\to \ket{a'}$ that was implemented at time $t=0$. In addition, note that the phase factor appearing in Eq.\eqref{eq app: state final cont perp} has two contributions. The phase $-3TE_{a'b}$ is the usual accumulated phase during $3T$ associated to the energy of the state $\ket{a',b}$; this phase will continue accumulating at later times. The phase $\mu T\Delta\lambda_{aa'}\lambda_b'$ can be understood as a Lamb-shift-like correction to the usual phase, due to the fact that the energies of the system $AB$ were perturbed during the time $3T$ where the stationary states were perturbed. 

The radiated state $\ket{\ii\kappa \Delta \lambda_{aa'}}_{ M[3L,5L]}$ is the vacuum displaced by the difference between the eigenvalues $\lambda_a-\lambda_{a'}$. These states satisfy 
\begin{equation}\braket{\ii\kappa \Delta \lambda_{aa'}}{\ii\kappa \Delta \lambda_{\bar a \bar a'}}_{ M[3L,5L]}=e^{-\frac{\mu T}{2}(\Delta \lambda_{aa'}-\Delta \lambda_{\bar a\bar a'})^2}.
\end{equation}
Thus, if we trace out the field radiated outside of the interaction region and ignore the deterministic static Lamb-shift-like corrections to the phases of the states $\ket{a',b}$, in the effective picture the action of the unitary $U_A$ must be replaced by the transformation
\begin{align}
    \mathcal{U}_A[\cdot] = \!\!\!\!\sum_{a',a, \bar a ,\bar a'}\!\!\!\! e^{-\frac{\mu T}{2}(\Delta \lambda_{aa'}-\Delta \lambda_{\bar a\bar a'})^2} \mathcal A_{a'a}  \ketbra{a'}{a} \cdot  \ketbra{\bar a}{\bar a'}  \mathcal A^*_{\bar a' \bar a},
\end{align}
applied at the same time $t=0$ on the system $A$. This channel describes the unitary $U_A$ that is dephased with respect to the eigenbasis of the coupling observable $A$. This dephasing is due to the radiated field recording the change $\lambda_a-\lambda_{a'}$ of the corresponding quantity realized by $U_A$.

As a simple example in this interaction picture, consider a qubit with $A=\frac{1}{2}Z$ and a local bit-flip $u_A = X$. There are only two possible transitions $\ket 0 \to \ket 1$ and $\ket 1 \to \ket 0$, with the spin-$Z$ changes $\Delta\lambda_{01}=+1$ and $\Delta\lambda_{10}=-1$, respectively.
Hence, the implemented gate is not a unitary bit flip, but rather 
\begin{equation}
\mathcal{X}_A[\rho]= X
\left(\begin{array}{cc}
\rho_{00} & e^{-2 \mu T}\rho_{01} \\
e^{-2 \mu T}\rho_{10} &\rho_{11}
\end{array}\right)
X \end{equation}
expressed in the computational basis. 

\section{Effect of local gates} \label{app:pert}

We consider here the situation where, in addition to the mediator that implements the effective interaction between $A$ and $B$, system $A$ also evolves independently.  Following the discussion in the main text let the local dynamics of Alice's system (in the absence of the mediator) between times $0$ and $t$ be given by a time-dependent unitary $u_A(t)$.  

\subsection{ Sudden evolution of system A}

In the single mediator loop picture, let $A$ undergo a local unitary evolution $U_A= u_A(t+2T)u_A(t)^\dag$ between its two interactions with the mediator at times $t$ and $t+2T$. The overall evolution of the systems over the mediation loop (between $t$ and $t+4T$) is then given by the propagator
\begin{align}
V_{ABM}' &=
u_A(t+4T)u_A^\dag(t+2T) e^{-\ii gBp}\,e^{-\ii gA x} u_A(t+2T)u_A(t)^\dag \,e^{\ii gBp}\,e^{\ii gAx} u_A(t) \\
&= u_A(t+4T)\left( e^{-\ii gBp}\,e^{-\ii gA_{t+2T} x}  \,e^{\ii gBp}\,e^{\ii gA_tx}\right),
\label{eq:app-disturbed-loop-2A}
\end{align}
where we defined the rotating observable 
\begin{equation}
A_t := u_A^\dag(t) A u_A(t).
\end{equation}
For simplicity assume that $u_A(t)$ is constant before $t$ and after $t+2T$ to rewrite 
\begin{align}
V'_{ABM}
&= U_A e^{-\ii gBp}\,e^{-\ii gA_1x}\,e^{\ii gBp}\,e^{\ii gA_0x}
\label{eq:app-disturbed-loop-2B} \quad \text{with} \quad A_0 := A_t \quad , \quad A_1 := U_A^{\dag} A_t U_A,
\end{align}
and analyze the operator
\begin{align}
\tilde V_{ABM} 
&= e^{-\ii gBp}\,e^{-\ii gA_1x}\,e^{\ii gBp}\,e^{\ii gA_0x}.
 \label{Eq:tildeVABM}
\end{align}

The action of $V'_{ABM}$ is in general not catalytic for the mediator: after the interaction, the mediator and the systems are left in an entangled state, corresponding to nonunitary reduced evolution of the composite system $AB$.

\subsection{Decoherence dependence on initial mediator state}

Before considering the perturbative regime, it is instructive to analyze two limiting mediator states exactly. These illustrate how decoherence depends on the mediator initial state independently of any small-$g$ approximation.
We contrast two initial states of the mediator: an eigenstate of position $\vert x_0 \rangle_M$ or an eigenstate of momentum $\vert p_0 \rangle_M$.

Writing $A_0=\sum_{a_0} \lambda_{a_0} \ketbra{a_0}$, 
$A_1=\sum_{a_1} \lambda_{a_1} U_A^\dag \ketbra{ a_1} U_A =\sum_{a_1} \lambda_{a_1} \ketbra{\tilde a_1}$, and 
$B=\sum_b \lambda_{b}' \ketbra{b}$, in the first case we find
\begin{equation}
\tilde V_{ABM} \vert x_0 \rangle_M
= \left(\sum_{a_0, a_1, b}  
e^{-ig(\lambda_{a_1} - \lambda_{a_0}) x_0} e^{  i g^2 \lambda_{a_1} \lambda_{b}'  }\braket{\tilde  a_1}{a_0}  \, 
\ketbra{\tilde  a_1}{a_0}_A\otimes\ketbra{b}_B 
\right) 
 \vert x_0 \rangle_M,
 \label{eq:app-disturbed-loop-3}
\end{equation}
where we have again used the identity $\mathcal D(\beta) \mathcal D(\alpha)=e^{\ii \, {\rm Im}[\beta \alpha^*]}\mathcal{D} (\alpha+\beta)$.
Crucially, the mediator returns to its initial state, while, for $A_0 \neq A_1$, the term in parentheses produces a residual $O(g)$ unitary transformation of the systems $AB$ (Lamb-shift-like correction), which depends on the value of $x_0$ via the phase $e^{-ig(\lambda_{a_1} - \lambda_{a_0}) x_0}$. Hence, for such initial states, the evolution is unitary and catalytic for the mediator, although which evolution is implemented depends on the initial position $x_0$ of the mediator.

On the other hand, when the mediator enters in the momentum eigenstate, we have
\begin{equation}
\tilde V_{ABM} \vert p_0 \rangle_M
=\sum_{a_0, a_1, b}  \left(
e^{\ii g^2 \lambda_{a_1} \lambda_{b}'}\ \braket{\tilde a_1}{a_0}
\ketbra{\tilde a_1}{a_0}_A \otimes\ketbra{b}_B 
 \otimes 
 \vert p_0 + g(\lambda_{a_0} - \lambda_{a_1}) \rangle_M\right).
  \label{eq:app-disturbed-loop-4}
\end{equation}
The final state of the mediator is modified if the eigenvalues $\lambda_{a_0}\neq \lambda_{a_1}$. In this case the mediator becomes entangled with system $A$, and decoherence occurs.

Note that if it is $B$ which undergoes a sudden unitary evolution, then the roles of $x$ and $p$ and the initial states of the mediator are reversed. Note also that an intermediate situation will occur if the mediator is in a state with uncertainty in both $x$ and $p$. In the next section, we study, perturbatively in $g$, the case where the mediator is in the vacuum state, which ensures symmetry between local evolutions of $A$ and $B$.

\subsection{Small interaction strength}
\label{Sub-SmallIInt}

Let us now suppose that the interaction strength $g$ is small (we do not suppose that $U_A$ is small). 
Expanding to second order in $g$, one finds
\begin{align}
\tilde V_{ABM}
&=
\id
-\ii g (A_1-A_0)x
+g^2[A_1x,Bp]
-\frac{g^2}{2}\Big((A_1x)^2+(A_0x)^2-2A_1A_0x^2\Big)
+O(g^3).
\end{align}
Using $[x,p]=\ii$ and the fact that $A_1$ and $B$ act on different systems, this becomes
\begin{align}
\tilde V_{ABM}
&=
\id
-\ii g\,\Delta A\,x
+\ii g^2 A_1B
-\frac{g^2}{2}\Big(\Delta A^2 - [A_1,A_0]\Big)x^2 
+O(g^3), 
\label{eq:app-V-expand}
\end{align}
where $\Delta A = A_1 - A_0$.

In Eq. (\ref{eq:app-V-expand}) we see that the effective interaction between systems $A$ and $B$ is  given by $A_1 B$, taking into account the evolution of the system $A$ between mediator interactions. We also find that entanglement is generated between the system $A$ and the mediator through the terms in $\Delta A x$ and $[A_1,A_0]x^2$. Note that both effects arise from the expansion to order  $g^2$  of the terms already written out in Eqs. (\ref{eq:app-disturbed-loop-3}) and (\ref{eq:app-disturbed-loop-4}).

We now assume that the mediator is initially in the vacuum state $\ket{0}_M$, with $\bra 0 x \ket 0 =0,\bra 0 x^2 \ket 0 =1/2 $. 
This is a very natural choice, since (i) it ensures that the uncertainties in $x$ and $p$ are equal, and therefore that systems $A$ and $B$ are treated similarly (all results would also map to local evolution of system $B$), and (ii) it has vanishing expectation values for both $x$ and $p$, ensuring that there is no first order correction to the interaction Hamiltonian due to the term $-\ii g\,\Delta A \langle x\rangle$.

The reduced evolution resulting from $\tilde V_{ABM}$ is therefore given by
\begin{equation} \label{eq: mapping of state}
\rho_{AB} \mapsto \tr_M \qty[ \tilde V_{ABM} \, \rho_{AB} \otimes \ketbra{0}{0}_M \tilde  V_{ABM}^\dag ]  =  \rho_{AB} + g^2 \qty( \ii[A_1B-H_A^{\rm L},\rho_{AB}]
+\frac12\mathcal D_{\Delta A}(\rho_{AB})) + O(g^3),
\end{equation}
where 
\begin{equation}
H_A^{\rm L}
:=\frac{\ii}{4}[A_1,A_0] \qquad \text{and} \qquad \mathcal D_{\Delta A}(\rho)
=
\Delta A\,\rho\,\Delta A
-\frac12\{\Delta A^2,\rho\}.
\end{equation}
Here, $H_A^{\rm L}$ is a correction to the Hamiltonian (analogous to a Lamb shift) and $\mathcal D_{\Delta A}$ is a Lindblad dissipator generating decoherence.

The local evolution therefore gives rise to a residual coupling with the mediator through the mismatch $\Delta A$, and produces dissipation which only vanishes when $A_1=A_{0}$.

\subsection{Continuous control of system A}

We now turn to the continuous-control limit of Fig.~\ref{Fig:ABM-Flux}(a), in which $A$ undergoes continuous local evolution, generated by the control Hamiltonian $h_A(t)$, which is slowly-varying and weak on the mediator flight timescale. As in the main text, we work in the interaction picture of this control, with $h_A^{\rm I}(t):=u_A^\dag(t)h_A(t)u_A(t)$. Specifically, we assume that $h_A(t)\approx h_A(t+2T)$ and 
\begin{align}
    \Delta A_t &= u_A^\dag(t+2T) A  \,u_A(t+2T)- u_A^\dag(t) A\,  u_A(t) \approx \ii  2 T [h_A^{\rm I}(t),A_t] =: 2 T \dot A_t \\
    \frac{\ii }{4} [A_{t+2T},A_t] & \approx  \, \frac{T}{2} [A_t, [h_A^{\rm I}(t),A_t]] =  -\ii \frac{T}{2} [A_t,\dot A_t].
\end{align}
Writing $\rho_{AB}$ for the interaction-picture state, we can now rewrite Eq.~\eqref{eq: mapping of state} as
\begin{equation}\label{state trans}
\rho_{AB} \mapsto  \rho_{AB} +g^2 \qty( \ii[A_{t+2T} B-H_A^{\rm L},\rho_{AB}]
+2\mathcal D_{T\dot A_t}(\rho_{AB})) + O(g^3),
\end{equation}

Considering the  non-overlapping case, illustrated in Fig.~\ref{Fig:ABM-Flux}(a), where new mediators are injected every $\tau = 4 T$, we note that the state transformation in Eq.\eqref{state trans} takes the time $4T$. Thus assuming that this time is short and defining the effective interaction strength $\mu =\frac{g^2}{4T}$, we can rewrite it as a Lindblad master equation with 
\begin{align}
\frac{\dd }{\dd t}\rho_{AB} = \ii \mu [A_{t+2T} B-H_A^{\rm L}, \rho_{AB}]+2\mu \, \mathcal D_{T\dot A_t}(\rho_{AB})
\end{align}
where 
\begin{equation}
H_A^{\rm L}
:=-\ii \frac{T}{2} [A_t,\dot A_t] \qquad \text{and} \qquad \mathcal D_{T\dot A_t}(\rho)
=
T^2(\, \dot A_t\,\rho\,\dot A_t
-\frac12 \{\dot A_t^2,\rho\}).
\end{equation}

We see that the dissipation rate in the master equation is given by
\begin{equation}
    r =  2 \mu  T^2 \tr \left[(\dot A_t)^2 \rho_{AB}\right] =  2 \mu T^2 \tr \left[-[h_A^{\rm I}(t),A_t]^2 \rho_{AB}\right].
\end{equation}
It is quadratic in the strength of the control Hamiltonian, while the time $t_{\rm gate}$ needed to implement a desired local gate on the system $A$, is inversely proportional to the strength of the control. Taking this into account, we find that the accumulated effect of decoherence during the gate-time is given by 
\begin{align}
    t_{\rm gate} r = t_{\rm gate} \times 2 \mu T^2 \tr \left[-[h_A^{\rm I}(t),A_t]^2 \rho_{AB}\right] \sim \frac{T^2 \mu }{t_{\rm gate}};
\end{align}
it can be neglected in the limit where local control satisfies $t_{\rm gate}\gg T^2 \mu$.

\section{Local exchange interactions}
\label{app:JC-dressed}

We now derive the dressed states of the local Jaynes--Cummings model in the
stroboscopic geometry of Fig.~\ref{Fig:strobo}.  We take the stroboscopic
period to be $\tau=T$ and use the standard convention in which $g$ is the
dimensionless coupling of each physical atom--mediator encounter.  Define 
\begin{equation}
 H_{XM}(a):=\sigma_X^{(+)}a -\sigma_X^{(-)}a^\dagger,
 \qquad
 \sigma_X^{(+)}=\ketbra{1}{0}_X,
 \quad
 \sigma_X^{(-)}=\ketbra{0}{1}_X,
 \label{eq:JC-local-generator}
\end{equation}
where $a,a^\dag$ are the ladder operators of the mediator modes coupled to the system $X$. The purpose of the calculation is twofold.  First, we must determine the system states that admit a stationary $O(g)$ field dressing in the presence of the mediators. Second, only after including this dressing can we calculate the $O(g^2)$ eigenphase and decide whether the corresponding quasienergy has an imaginary part.  

At a reference stroboscopic time, let $a_0,a_1,a_2,a_3$ denote the incoming
mode and the three modes already in flight.  The two pairs of local
interactions occurring during one step can then be written as
\begin{align}
 U_{A M}(g)
 &=\exp\!\left\{g\left[H_{AM}(a_0)-H_{AM}(a_2)\right]\right\},
 \nonumber\\
 U_{B M}(g)
 &=\exp\!\left\{g\left[H_{BM}(a_1)-H_{BM}(a_3)\right]\right\}.
 \label{eq:JC-strobo-unitaries}
\end{align}
Equation~(\ref{eq:JC-strobo-unitaries}) specifies the microscopic timing
convention: the two couplings acting on the same atom at a stroboscopic time
are represented by one exponential of their summed generators.  Replacing
this simultaneous coupling by two ordered exponentials would add a
same-atom commutator at order $g^2$ and defines a different model.

Let $\widetilde K$ denote the unitary translation of the complete mediator
train, $\widetilde K a_j^\dagger\widetilde K^\dagger=a_{j+1}^\dagger$, and let
$\Pi_{\rm vac}$ project onto the sector with no excitation in the outgoing
mode $a_4$.  We write $K:=\Pi_{\rm vac}\widetilde K$ for the shift projected onto the no outgoing excitation sector.
In the one-excitation sector we therefore have
\begin{equation}
K a_j^\dagger\ket{0}_{ M}=a_{j+1}^\dagger\ket{0}_{ M}
\quad (j=0,1,2),
\qquad
K a_3^\dagger\ket{0}_{ M}=0.
\label{eq:JC-active-shift}
\end{equation}
 Accordingly, the
one-period no-emission map is the contraction
\begin{equation}
 F_\tau(g)=K U_{B M}(g)U_{A M}(g).
 \label{eq:JC-one-period-map}
\end{equation}
We seek to determine the eigenstates
\begin{equation}
F_\tau\ket{\Psi}=\lambda\ket{\Psi} \quad , \quad \lambda \in \mathbb{C}
\label{eq:Ftau}
\end{equation}
 Then $1-|\lambda|^2$ is the probability
lost from the active no-emission branch in one period.

\subsection{Perturbative eigenproblem}

Introduce
\begin{align}
 G_A&=H_{AM}(a_0)-H_{AM}(a_2),
 &G_B&=H_{BM}(a_1)-H_{BM}(a_3),
 \nonumber\\
 G&=G_A+G_B,
 &G^{(2)}&=\frac{G_A^2}{2}+\frac{G_B^2}{2}+G_BG_A .
 \label{eq:JC-G-definitions}
\end{align}
The ordered product in Eq.~(\ref{eq:JC-one-period-map}) has the expansion
\begin{equation}
 F_\tau(g)=K\left[\id+gG+g^2G^{(2)}+O(g^3)\right].
 \label{eq:JC-map-expansion}
\end{equation}
The term $G_BG_A$ in $G^{(2)}$ is important: it retains the ordering of
the $A$ and $B$ interactions during one stroboscopic step.

For an atomic state $\ket{\varphi_\alpha}_{AB}$, we seek a dressed eigenstate and
its eigenvalue in the form
\begin{align}
 \ket{\Psi_\alpha}
 &=\ket{\varphi_\alpha}_{AB}\ket{0}_{ M}
   +g\ket{\chi_\alpha^{(1)}}+O(g^2),
 \nonumber\\
 \lambda_\alpha
 &=1+g^2\lambda_\alpha^{(2)}+O(g^3).
 \label{eq:JC-perturbative-ansatz}
\end{align}
Let $P=\id_{AB}\otimes\ketbra{0}{0}_{ M}$ be the projector onto the mediator vacuum, and choose the perturbative
gauge $P\ket{\chi_\alpha^{(1)}}=0$.

There is no first-order eigenvalue correction because
$PGP=0$: acting once on an atomic state and the mediator vacuum creates one
mediator excitation and is therefore orthogonal to the vacuum sector.

Inserting into Eq. (\ref{eq:Ftau}), the first-order equation is
\begin{equation}
 (K-\id)\ket{\chi_\alpha^{(1)}}
 =-KG\ket{\varphi_\alpha,0},
 \label{eq:JC-first-order-equation}
\end{equation}
whereas projection of the second-order equation onto the mediator vacuum
gives
\begin{equation}
 PK\left(G^{(2)}\ket{\varphi_\alpha,0}
       +G\ket{\chi_\alpha^{(1)}}\right)
 =\lambda_\alpha^{(2)}\ket{\varphi_\alpha,0}.
 \label{eq:JC-second-order-equation}
\end{equation}
To obtain the second equation, one first keeps the unknown term
$g^2\ket{\chi_\alpha^{(2)}}$ on both sides of the eigenvalue equation and then
projects with $P$.  Since translation leaves the mediator vacuum invariant,
the projected $\ket{\chi_\alpha^{(2)}}$ terms cancel.  Equations
(\ref{eq:JC-first-order-equation}) and
(\ref{eq:JC-second-order-equation}) therefore determine the $O(g)$ dressing
and the $O(g^2)$ phase without requiring the full second-order eigenvector.

\subsection{Zero- and single-excitation sectors}

Since $G\ket{00,0}=0$, the vacuum state
\begin{equation}
 \ket{\Psi_{00}}=\ket{00}_{AB}\ket{0}_{ M}
 \label{eq:JC-exact-vacuum}
\end{equation}
is an exact stationary state for every $g$.

In the single-excitation sector, write
$\ket{\varphi}_{AB}=c_A\ket{10}_{AB}+c_B\ket{01}_{AB}$.  
The first order equation will not determine $c_A$ and $c_B$, but will determine the $O(g)$ dressed state $\ket{\chi^{(1)}(c_A,c_B)}$ as a function of $c_A$ and $c_B$. Using
Eq.~(\ref{eq:JC-active-shift}), the source term in the first-order equation is
\begin{equation}
-KG\ket{\varphi,0}
=\ket{00}_{AB}\left[c_A(a_1^\dagger-a_3^\dagger)
+c_Ba_2^\dagger\right]\ket{0}_{ M}.
\label{eq:JC-single-source}
\end{equation}
Using 
$(K-\id)(a_1^\dagger+a_2^\dagger)\ket0
=(a_3^\dagger-a_1^\dagger)\ket0$ and
$(K-\id)(a_2^\dagger+a_3^\dagger)\ket0=-a_2^\dagger\ket0$, one finds that the solution of 
Eq.~(\ref{eq:JC-first-order-equation}) is
\begin{equation}
 \ket{\chi^{(1)}(c_A,c_B)}
 =-\ket{00}_{AB}
 \left[c_A(a_1^\dagger+a_2^\dagger)
       +c_B(a_2^\dagger+a_3^\dagger)\right]\ket{0}_{ M}.
 \label{eq:JC-single-first-order}
\end{equation}

We next compute
\begin{align}
PKG^{(2)}\ket{\varphi,0}
&=-\left(c_A\ket{10}+c_B\ket{01}\right)\ket0,
\nonumber\\
PKG\ket{\chi^{(1)}(c_A,c_B)}
&=\left[(c_A+c_B)\ket{10}+(-c_A+c_B)\ket{01}\right]\ket0.
\label{eq:JC-single-second-order-parts}
\end{align}
Adding them cancels the diagonal terms and leaves
\begin{equation}
 PK\left(G^{(2)}\ket{\varphi,0}
       +G\ket{\chi^{(1)}(c_A,c_B)}\right)
 =\left(c_B\ket{10}-c_A\ket{01}\right)\ket{0}_{ M}.
 \label{eq:JC-single-second-order}
\end{equation}
In the ordered atomic basis $\{\ket{10},\ket{01}\}$, Eq. (\ref{eq:JC-second-order-equation}) is the eigenvalue equation for
the anti-Hermitian matrix
$\left(\begin{smallmatrix}0&1\\-1&0\end{smallmatrix}\right)$.  Hence its
eigenvalues are purely imaginary: there is an $O(g^2)$ phase but no
$O(g^2)$ loss. The eigenvectors and eigenvalues of this anti-Hermitian matrix are
\begin{equation}
 \ket{\psi_\pm}=\frac{\ket{10}\pm\ii\ket{01}}{\sqrt{2}},
 \qquad
 \lambda_\pm^{(2)}=\pm\ii .
 \label{eq:JC-single-eigenvectors}
\end{equation}
Consequently, at a reference stroboscopic time, the eigenstates and eigenvalues are
\begin{align}
 \ket{\Psi_\pm}
 ={}&\ket{\psi_\pm}_{AB}\ket{0}_{ M}
 \nonumber\\
 &-\frac{g}{\sqrt{2}}\ket{00}_{AB}
 \left[(a_1^\dagger+a_2^\dagger)
 \pm\ii(a_2^\dagger+a_3^\dagger)\right]\ket{0}_{ M}
 +O(g^2),
 \label{eq:JC-dressed-pm}\\
 \lambda_\pm
 ={}&1\pm\ii g^2+O(g^3)
 =e^{\pm\ii g^2}+O(g^3).
 \label{eq:JC-phase-pm}
\end{align}
After $k$ periods, the leading phase is therefore
$e^{\pm\ii k g^2}$.  Equivalently, writing
$\lambda_\pm=e^{-\ii E_\pm\tau}$ gives
$E_\pm=\mp g^2/\tau=\mp\mu$, in agreement with the eigenvalues of
Eq.~(\ref{eq:HJCeff-B}).

\subsection{Two-excitation sector}

For the unperturbed state $\ket{11}_{AB}\ket{0}_{ M}$,
the source term in the first-order equation is
\begin{equation}
-KG\ket{11,0}
=\left[a_2^\dagger\ket{10}_{AB}
+(a_1^\dagger-a_3^\dagger)\ket{01}_{AB}\right]\ket0_{M}.
\label{eq:JC-double-source}
\end{equation}
Eq.~(\ref{eq:JC-first-order-equation}) therefore gives
\begin{equation}
 \ket{\chi_{11}^{(1)}}
 =-\left[(a_2^\dagger+a_3^\dagger)\ket{10}_{AB}
        +(a_1^\dagger+a_2^\dagger)\ket{01}_{AB}\right]
   \ket{0}_{ M}.
 \label{eq:JC-double-first-order}
\end{equation}
At second order, the two terms on the left hand side of Eq. (\ref{eq:JC-second-order-equation}) cancel:
\begin{align}
 PKG^{(2)}\ket{11,0}&=-2\ket{11,0},
 \nonumber\\
 PKG\ket{\chi_{11}^{(1)}}&=+2\ket{11,0}.
 \label{eq:JC-double-cancellation}
\end{align}
It follows that $\lambda_{11}^{(2)}=0$ and hence
\begin{align}
 \ket{\Psi_{11}}
 ={}&\ket{11}_{AB}\ket{0}_{ M}
 \nonumber\\
 &-g\left[(a_2^\dagger+a_3^\dagger)\ket{10}_{AB}
         +(a_1^\dagger+a_2^\dagger)\ket{01}_{AB}\right]
   \ket{0}_{ M}+O(g^2),
 \label{eq:JC-dressed-11}\\
 \lambda_{11}={}&1+O(g^3).
 \label{eq:JC-phase-11}
\end{align}
The $O(g^2)$ terms implicit in Eqs.~(\ref{eq:JC-dressed-pm}) and
(\ref{eq:JC-dressed-11}) include the corresponding normalization
corrections as well as genuine $O(g^2)$ corrections to the dressed state.\\

We leave open the extension of the  JC model to the continuous field limit.

\end{document}